\def \fext{pdf}
\def \cng{black}
\def \cngb{black}
\newcommand{\capup}{\vspace{-3truemm}}
\newcommand{\argmin}{\mathop{\rm arg~min}\limits}
\newcommand{\ison}{{\rm ~is~on~}}
\def\vect#1{\mbox{\boldmath $#1$}}
\def\mc#1{\mathcal{#1}}
\begin{document}

\title{Self-Fulfilling Signal of an Endogenous State in Network Congestion Games}



\author{Tatsuya Iwase         \and
        Yukihiro Tadokoro \and
        Daisuke Fukuda
}


\institute{T. Iwase, Y. Tadokoro \at
	Toyota Central R\&D Labs., Inc. \\
              Aichi, 480-1192 Japan \\
              \email{tiwase@mosk.tytlabs.co.jp}           
           \and
           D. Fukuda \at
           Tokyo Institute of Technology \\
           Tokyo, 152-8550 Japan
}

\date{Received: date / Accepted: date}

\maketitle

\begin{abstract}
We consider the problem of coordination via signaling in network congestion games to improve social welfare deteriorated by incomplete information about traffic flow. Traditional studies on signaling, which focus on exogenous factors of congestion and ignore congestion externalities, fail to discuss the oscillations of traffic flow. To address this gap, we formulate a problem of designing a coordination signal on endogenous information about traffic flow and introduce a {\it self-fulfilling} characteristic of a signal that guarantees an outcome flow consistent with the signal itself without causing the unwanted oscillation. An instance of the self-fulfilling signal is shown in the case of a Gaussian signal distribution. In addition, we show simple numerical examples. The results reveal how a self-fulfilling signal suppresses the oscillation and simultaneously improves social welfare through improved network efficiency.
\keywords{Network Congestion Games \and Coordination \and Signaling \and Endogenous Information \and Oscillation \and ATIS}
\end{abstract}

\section{Introduction}
\label{intro}

\label{sec:intro}

%
%
%
%
Network congestion games (also called selfish routing games) model route choice behaviors of self-interested agents on a graph. One prominent application of network congestion games is in the studies on routing in transportation or telecommunication networks. These studies conclude that selfish behaviors by agents cause congestion and inefficiency in a network \citep{roughgarden}. Therefore, the coordination problem emerges among selfish agents to mitigate the inefficient use of the network and improve overall social welfare.

One way to realize such coordination is to charge tolls on the congested edges \citep{toll}. However, most previous studies are based on the assumption that agents have complete information on other agent types and can therefore predict the flow of all edges in equilibrium. In contrast, agents sometimes make incorrect decisions in real situations because they do not have information regarding other agents in the network and regarding which routes are congested at any given moment. In such a case, incomplete information causes inefficiency \citep{angeletos}.

Signaling is a possible solution to eliminate the inefficiency arising from incomplete information by sending signals to agents to provide traffic flow information for all edges. When applied to transportation networks, car navigation systems or smartphones can be used to implement signals. However, this solution is not comprehensive because difficulties also arise due to congestion externalities. Since traffic flow is an endogenous variable of the game, a signal about a flow can change the outcome of the flow itself. For example, if a car navigation system reports that one of two routes is congested, agents will then choose the other route, which in turn is likely to become congested due to higher traffic volumes. This results in congestion oscillating between routes, which is called {\it oscillation} of traffic flow or {\it hunting phenomenon} \citep{klein, fukuda, oscil}. This oscillation generates outcome flows that differ from those informed by the signal, thus turning the system into an inadvertent liar.

Designing a traffic flow signal that results in the same flow outcome as the pre-committed signal is necessary to avoid the oscillation. This signal is referred to as a {\it self-fulfilling} signal as it has the capacity to realize an outcome equal to the signal itself. Our goal is to design a self-fulfilling signal that mitigates the inefficiency caused by incomplete traffic flow information.

\subsection{Our Contributions and Paper Structure}
\label{sec:contribution}

We examine a self-fulfilling signal in a network congestion game to mitigate inefficiency caused by incomplete information. Our contribution is threefold. First, we formulate the problem of designing a self-fulfilling signal for an endogenous game variable. This problem formulation is general and includes many applications that have endogenous variables with incomplete information. Second, we show an instance of an asymmetric Gaussian signal that has self-fulfilling capability in multicommodity, atomic, and unweighted network congestion games with affine cost functions and Gaussian priors. This result provides a new alternative of endogenous information to the signaling scheme. Third, we employ examples of congestion games to verify the performance of a self-fulfilling signal in mitigating inefficiency caused by incomplete information. The results indicate the possibilities of coordination by signaling with endogenous information.

Section~\ref{sec:model} presents our model. Section~\ref{sec:sf} illustrates our main results regarding self-fulfilling signals in network congestion games. Section~\ref{sec:exp} describes numerical examples and their results.

\subsection{Related Works}
\label{sec:related}

The idea of Wardrop's User Equilibrium (UE) and System Optimum (SO) \citep{wardrop} has been studied in the field of transportation, telecommunication networks and game theory \citep{altman}. Since Rosenthal first introduced congestion games \citep{rosenthal}, many varieties of derivative games have been proposed, and significant results have been obtained regarding inefficiencies in equilibrium (see e.g., \citep{roughgarden, atomic, mixed}) as well as regarding coordination of agents to optimize social welfare (see e.g., \citep{so}). However, most such studies assume complete information about traffic flow to calculate equilibria. A review of previous related studies reveals that the most popular coordination mechanism is toll collection \citep{toll,dynaprice}. However, this approach introduces costs associated with establishing the collection infrastructure as well as requires charging responsibilities.

Coordination between agents and emergence of consensus are also topics that have been extensively researched. Consensus algorithms \citep{consensus}, El Farol bar problems \citep{elfarol}, minority games \citep{minor}, and multiagent Markov decision process \citep{mamdp} are all examples of coordination problems among autonomous agents without any coordinator. These studies assume the involvement of a constant set of agents and that these agents iteratively play the same game, updating their strategies to avoid conflicts with other agents through reinforcement learning.
\color{\cng}
However, scenarios also exist wherein agents fail to learn coordinated strategies. Former studies showed that reinforcement learning becomes computationally intractable and algorithm may not converge when agents do not have full observation of whole states, such as traffic of all edges \citep{pomdp,pomdp2,pomdp3}. In an extreme case wherein agents experience the game only once, such as when they drive to popular leisure venue on vacation, they do not have the opportunity to learn road traffic iteratively.
\color{black}

Coordination problems with a mediator have also been examined in leader-follower games \citep{lfmdp} and Stackelberg games \citep{isg}. In these studies, a leader makes a decision that affects followers’ rewards such that followers are motivated to coordinate. However, most of these studies assume complete information on the state of the environment and that both the leader and followers can observe the state. This is not always true in real road networks.

Recently, a new coordinating approach has been proposed where a mediator utilizes private information as a signal to persuade agents and realize coordination \citep{Kamenica}. Since this new signaling approach does not require infrastructure such as toll collections and thus is free from the associated initial costs, it has the potential to be quickly implemented in a range of applications, 
\color{\cng}
especially on the basis of widespread smartphone technologies or upcoming intelligent vehicles connected to the Internet. 
\color{black}
For example, \citep{kremer} demonstrated an application of signaling to road networks. A sender collects traffic flow information for each road from individual drivers and subsequently sends all drivers collated information that pertains to their route choices. However, these studies assume that traffic flow is an exogenous variable and neglect congestion externalities that cause the oscillation of traffic flow. Bayes correlated equilibrium \citep{bce} is a signaling technique that is able to manage externalities. However, it assumes that a signal is exogenous information and not endogenous information such as traffic flow.

The oscillations of the game outcome have been examined in the context of stability analysis of repeated games with the theory of Nash dynamics \citep{nd} or evolutionary game theory 
\color{\cng}
\citep{evolution, zhang}. 
\color{black}
However, these studies are based on games among autonomous agents without any coordinators. 
\color{\cng}
\citep{bayeslearn} concluded that drivers' observation errors accumulate through iterations in selfish routing under incomplete information and the traffic flow disperses over a network and becomes stable autonomously. However, this result seems optimistic when compared with reality. 
\color{black}
\citep{klein} employs noisy signaling to manage the oscillation. However, their approach is heuristic and the performance is not guaranteed by a theory. 
\citep{arg} formulated the oscillation problem as the anticipatory route guidance problem. They use variable message signs (VMS) to send drivers full traffic information. The information is limited only to drivers who pass through the nodes with VMS. This information control splits traffic into several paths and suppresses the oscillation. However, the information control is not flexible compared with the signaling approach that allows partial information disclosure with arbitrary signal distributions. 
\citep{sf} proposed a deterministic version of our self-fulfilling signal approach. However, such a deterministic signal sometimes causes unstable traffic and fails to suppress oscillation as shown in the later section of this paper. There are other studies on advanced traveler information systems (ATIS) that suppress the oscillation problem \citep{akiva, paz}. However, their simulation-based approach are not analytical and do not guarantee the convergence of oscillations. 
Studies on traffic engineering reveal that the oscillation can actually be observed in real road networks \citep{fukuda,kanamori}. Their approach against the oscillation is also heuristic and evaluated by a simulation in a specific case.

Despite the inroads made by the studies above, the oscillation remains an open problem. We therefore propose a novel theoretical approach for managing this open real-world issue in this study.
\color{\cng}
We consider situations such that agents do not have full observability, and then have biased beliefs on traffic as a result of imprecise adaptive learning.
\color{black}

\section{The Model}
\label{sec:model}

\subsection{Network Congestion Game}
\label{sec:ncg}

This section introduces the atomic, singleton, and unweighted network congestion game examined in this study. A network congestion game is defined by a tuple $\Gamma=(\mc{N},\mc{G},\mc{C},\mc{K})$, where

\begin{itemize}
	\item $\mc{N}$ denotes a finite set of agents where $|\mc{N}|=n$ and an agent $i \in \mc{N}$.
	\item $\mc{G}=(\mc{V},\mc{E})$ is a graph where $|\mc{E}|=m$ and an edge $e \in \mc{E}$.
	\item $c_{e} \in \mc{C}: \mathbb{R}_{\geq 0} \to \mathbb{R}_{\geq 0}$ denotes a cost function of an edge $e$.
	\item $\mc{K}=\mc{V} \times \mc{V}$ denotes a finite set of origin-destination (OD) pairs. $k=(o,d) \in \mc{K}$ is a pair, and $k_{i}=(o_{i},d_{i}) \in \mc{K}$ is a pair of agent $i$. $n_{k}=|\{k_{i}:k_{i}=k\}|$ denotes the number of agents whose pair is $k$.
\end{itemize}

A game is called a {\it single-commodity} flow problem \citep{so} when $|\mc{K}| = 1$ or a {\it multi-commodity} flow problem when $|\mc{K}| \geq 2$. We assume that OD pair $k$ is connected by $M_{k} \geq 1$ paths (or routes). 
A route is denoted by $r_{k} \in  \mc{R}_{k}$
\color{\cng}
, where $\mc{R}_{k}$ denotes a finite set of routes that have an OD pair $k$. 
\color{black}
A set of all possible routes for $\mc{K}$ is denoted by $\mc{R}=\underset{k \in \mc{K}}{\cup}\mc{R}_{k}$ and $|\mc{R}|=M$. We consider a singleton routing game in which each agent is restricted to choosing a single route. Accordingly, the strategy set of agent $i$ is $\mc{A}_{i}=\mc{R}_{k_{i}}$ and the
set of strategy profiles is $\mc{A}=\underset{i \in \mc{N}}{\times}\mc{A}_{i}$.

Given a strategy profile $a \in \mc{A}$, the set of users of route $r$ is denoted by $\mc{S}_{r}(a)=\{i \in \mc{N}:r=a_{i} \in \mc{A}_{i}\}$. Similarly, the set of users of edge $e$ is denoted by $\mc{S}_{e}(a)=\{i \in \mc{N}:r=a_{i}, e \ison r\}$. In addition, we define a route choice matrix $X(a)=[x_{ir}]$ corresponding to the strategy profile, the elements of which are denoted by Kronecker delta as
\begin{equation}
x_{ir}= \left \{
\begin{array}{lc}
1 & if~a_{i}=r, \\
0 & otherwise.
\end{array}
\right.
\end{equation}

In the remainder of the study, the argument $a$ is subject to be omitted. The flow on route $r$ is calculated by

\begin{equation}
h_{r}=|\mc{S}_{r}|=\sum_{i \in \mc{N}}x_{ir}.
\label{eq:rflow}
\end{equation}

The flow for all routes is denoted by $\vect{h} = (h_{1},...,h_{M})^T \in \mc{H}=\mathbb{R}_{\geq 0}^M$. Similarly, the flow on edge $e$ is $f_{e}=|\mc{S}_{e}|$ and $\vect{f} = (f_{1},...,f_{m})^T \in \mc{F}=\mathbb{R}_{\geq 0}^m$. The occurrence of edges on a route is specified by the edge-route incidence matrix $\vect{D}=[\delta_{er}]$ such that

\begin{equation}
\delta_{er}= \left \{
\begin{array}{lc}
1 & if~e \ison r, \\
0 & otherwise.
\end{array}
\right.
\end{equation}

Then the relation of flow between edges and routes is denoted by

\begin{equation}
\vect{f}=\vect{D}\vect{h}.
\label{eq:edgef}
\end{equation}

Let the cost of all edges be $\vect{c}=(c_{1}(\vect{f}),...,c_{m}(\vect{f}))$, the cost of routes $\vect{\phi}=(\phi_{1},...,\phi_{M})$ can thus be calculated as

\begin{equation}
\vect{\phi}=\vect{c}\vect{D}.
\label{eq:routec}
\end{equation}

For simplicity, we use a linear Gaussian model for Bayesian inference in the rest of the paper (see Appendix \ref{sec:gaussbayes}) and assume an affine cost function $\vect{c}(\vect{f})=\vect{\Lambda}\vect{f}+\vect{b}$ throughout.

\subsection{Incomplete Information and Signaling}
\label{sec:signal}

In network congestion games, agents select routes that minimize their expected costs. However, we examine a problem wherein agents do not have complete information regarding the types of other agents and cannot predict equilibrium traffic flow. In such a case, agents try to minimize their expected costs based on their subjective probability distributions of traffic flow. We refer to this subjective probability distribution as {\it belief}. For simplicity, we use a Gaussian distribution throughout. The prior belief of agent $i$ is denoted by
\begin{equation}
p_{0i}(\vect{f}) = N(\vect{f};\vect{\mu}_{0i},\vect{\Sigma}_{0})
\label{eq:prior}
\end{equation}
where $N(\vect{f};\vect{\mu},\vect{\Sigma})$ is the 
\color{\cng}
multivariate Gaussian probability density function of $\vect{f}$
with mean vector $\vect{\mu}$ and variance-covariance matrix $\vect{\Sigma}$. $\vect{\Sigma}_{0}$ is a common parameter to all agents. The parameter $\vect{\mu}_{0i} \in \mc{F}$ differs for each agent, and 
is independent and identically distributed (i.i.d) by being sampled from the following distribution
\color{black}
\begin{equation}
p(\vect{\mu}_{0i})=N(\vect{\mu}_{0i};\vect{\mu}_{h},\vect{\Sigma}_{h})
\label{eq:pu0i}
\end{equation}
\color{\cng}
where $\vect{\mu}_{h} \in \mc{F}$ and $\vect{\Sigma}_{h}$ are parameters common to all agents.
\color{black}

Since agents do not possess accurate information on traffic flow, their beliefs are biased and thus cause inefficiency in the network usage. This study examines a problem of a coordinator who sends asymmetric signals about traffic flow to agents to improve traffic efficiency. Following \citep{Kamenica}, we call the coordinator who sends the signal a {\it sender} and use the term {\it receivers} for agents who receive these signals. For coordination within the network, the sender commits to send an asymmetric signal $\vect{s} \in \mc{F}$ that follows the conditional distribution
\begin{equation}
\pi(\vect{s}|\vect{f})=N(\vect{s};\vect{f},\vect{\Sigma}_{s}).
\label{eq:pisf}
\end{equation}
\color{\cng}
In a transportation problem, this means that the sender announces traffic information of all edges. It may be shown, for example, in a smartphone road map.
\color{black}

However, the sender cannot observe traffic flow $\vect{f}$ because it is an outcome of the game (endogenous variable). Therefore, the sender must determine the signals to be sent following the distribution independent of the traffic flow

\begin{equation}
p(\vect{s})=N(\vect{s};\vect{\mu}_{s},\vect{\Sigma}_{s})
\label{eq:pis}
\end{equation}
\color{\cng}
where $\vect{\mu}_{s} \in \mc{F}$ and $\vect{\Sigma}_{s}$ are parameters to be determined by the sender. 
\color{black}

After receiving the signal, each receiver updates his belief according to the sender's commitment and Bayes' law as follows:
\begin{equation}
p_{i}(\vect{f}|\vect{s})=\frac{\pi(\vect{s}|\vect{f})p_{0i}(\vect{f})}{\int \pi(\vect{s}|\vect{f})p_{0i}(\vect{f})\mathrm{d}\vect{f}}.
\label{eq:post}
\end{equation}

A receiver then selects a route that minimizes expected costs as follows:

\color{\cng}
\begin{equation}
a_{i} = \argmin_{r \in A_{i}} \mathbb{E}[\phi_{r}(\vect{f})|\vect{s}],
\label{eq:rprob}
\end{equation}
where
\begin{equation}
\mathbb{E}[\phi_{r}(\vect{f})|\vect{s}]=\int p_{i}(\vect{f}|\vect{s})\phi_{r}(\vect{f})\mathrm{d}\vect{f}.
\label{eq:ephi}
\end{equation}
\color{black}

Since the expected cost $\mathbb{E}[\phi_{r}(\vect{f})|\vect{s}]$ depends upon the posterior belief in (\ref{eq:post}), the sender can control route choices of receivers and the outcome flow $\hat{\vect{f}} \in \mc{F}$.

However, if the signal is not conditional on the outcome flow $\hat{\vect{f}}$ as committed in (\ref{eq:pisf}), the sender becomes an inadvertent liar. Aggravating the situation, this causes a discrepancy between signal and outcome flow, resulting in the emergence of the oscillations.

\subsection{Our Problem}
\label{sec:problem}

Our problem in this study is therefore to design a signal distribution $p(\vect{s})$ that realizes an outcome consistent with commitment (\ref{eq:pisf}). This is formulated in general as follows.

\begin{definition}
	(Self-fulfilling condition) A signal $\vect{s}$ is self-fulfilling if and only if a commitment of the signal distribution is conditional on the outcome of an endogenous variable $\hat{\vect{f}}$:
	\begin{equation}
	\vect{s} \sim \pi(\vect{s}|\hat{\vect{f}}).
	\label{eq:sf}
	\end{equation}
	\label{def:sf}
\end{definition}

\color{\cng}
If a signal $\vect{s}$ satisfies this condition, receivers can trust the commitment $\pi(\vect{s}|\vect{f})$ and can calculate (\ref{eq:post}). Since the signal $\vect{s}$ cannot cause an increase of the expected cost $\mathbb{E}[\phi_{r}(\vect{f})|\vect{s}]$, receivers make decisions on the basis of (\ref{eq:rprob}), which can be controlled by the signal.
\color{black}
Thus, our goal is to design a signal distribution $p(\vect{s})$ that meets (\ref{eq:sf}).

\subsection{Iteration of Games}
\label{sec:iteration}

Thus far, we have considered only one-shot games. In some later sections, we will examine the case of repeated games wherein the signal is determined according to the outcome of the previous game to meet the self-fulfilling condition. In each iteration $t \in [1,...,\tau]$, the strategy profile, route choice matrix, flow of edges and routes, and cost of edges and routes are denoted by $a^t$, $X^t$, $\vect{f}^{t}$, $\vect{h}^t$, $\vect{c}^t$ and $\vect{\phi}^t$, respectively. Through each iteration, the game $\Gamma$ remains constant.

Since vehicles in a road network constantly change, we assume short-lived and myopic receivers who exist only in one game and are entirely replaced in the next game. This assumption is denoted by constant priors (\ref{eq:prior}) and a constant knowledge of signal accuracy (\ref{eq:pisf}), which means that receivers 
\color{\cng}
already have fixed beliefs and 
\color{black}
do not update them in iterations. Though this is a radical assumption, 
yet it is reasonable to model the traffic on weekends, so that the people visiting the places for the first time do not have any chance to update their knowledge. The assumption also models an aspect of weekday-traffic, wherein drivers do not have full observability of the entire traffic in the whole network. The assumption means that drivers have already learnt from everyday traffic experience before the signaling iterations. Owing to the intractability of reinforcement learning under partial observability  \citep{pomdp,pomdp2,pomdp3}, their strategies have been trapped into a local minimum, which is different from Wardrop equilibria, and their knowledge regarding traffic remains incomplete. Though learning dynamics amid local observability can converge to Wardrop equilibria in the case of linear cost functions \citep{dominique}, yet it does not hold in general. If the learning dynamics does not converge to Wardrop equilibria, then each driver can only have vague and biased knowledge as a constant prior in consequence of day-to-day traffic experiences.

Once a signaling system starts functioning, the sender determines signals based on the constant priors throughout the iterations. The drivers know that the outcome traffic is controlled by the self-fulfilling signals as committed in advance. Hence, they cannot learn about the “natural” traffic without any influence of the signals. Accordingly, the drivers cannot update their priors, which are unconditional on the signals, through the iterations.

In repeated games, the sender determines a signal $\vect{s}^t$ based on the constant priors and 
the observation of the previous outcome flow $\hat{\vect{f}}^{t-1}$. The outcome of a current game $\hat{\vect{f}}^{t}$ is determined by the current signal $\vect{s}^{t}$.

\section{Self-Fulfilling Signal}
\label{sec:sf}

The method to design a self-fulfilling signal depends on whether the sender has knowledge of receivers’ priors. In this section, we sequentially describe both cases.

\subsection{One-Shot Game: Known Prior Case}
\label{sec:kp}

In the case where the sender knows receivers’ priors, the self-fulfilling signal can be determined according to these priors. Given the game $\Gamma$, it is possible for the sender to predict the receivers’ choices and the outcome flow $\hat{\vect{f}}$ corresponding to the sender’s signal $p(\vect{s})$. Hence, the sender can determine a signal such that the outcome $\hat{\vect{f}}$ becomes spontaneously consistent with the signal.

The following lemma indicates how to predict the outcome flow corresponding to a given signal.

\begin{lemma}
	The outcome flow of single-commodity with affine cost functions is given by \\
	\begin{equation}
	\hat{\vect{f}}=n\vect{D}[cdf(\vect{0};\vect{\mu}_{r},\vect{\Sigma}_{r})]^T,
	\label{eq:fsc}
	\end{equation}
	where $cdf(\vect{y};\vect{\mu},\vect{\Sigma})$ is the multivariate normal cumulative distribution function and $\vect{\mu}_{r}$ and $\vect{\Sigma}_{r}$ are defined in (\ref{eq:paramr}).
	\label{thm:fsc}
\end{lemma}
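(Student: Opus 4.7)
The plan is to turn the route-choice event into a rectangular event on a multivariate Gaussian vector of route-cost differences, whose probability is exactly the multivariate normal CDF evaluated at $\vect{0}$, and then to aggregate these per-receiver probabilities via the edge-route incidence in (\ref{eq:edgef}).

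First, I would apply the Gaussian conjugate update (Appendix~\ref{sec:gaussbayes}) to the prior (\ref{eq:prior}) and the likelihood (\ref{eq:pisf}), obtaining $p_i(\vect{f}|\vect{s})=N(\vect{f};\vect{\mu}_{i|s},\vect{\Sigma}_{i|s})$ with $\vect{\mu}_{i|s}$ affine in the agent-specific prior mean $\vect{\mu}_{0i}$ and in the realized signal $\vect{s}$. Substituting into (\ref{eq:ephi}), together with the affine cost assumption and the route-cost relation (\ref{eq:routec}), the $M$-vector of expected route costs $\vect{y}_i:=\mathbb{E}[\vect{\phi}|\vect{s}]$ becomes an affine function of $(\vect{\mu}_{0i},\vect{s})$. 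Because $\vect{\mu}_{0i}\sim N(\vect{\mu}_h,\vect{\Sigma}_h)$ by (\ref{eq:pu0i}) and $\vect{s}\sim N(\vect{\mu}_s,\vect{\Sigma}_s)$ by (\ref{eq:pis}), the vector $\vect{y}_i$ is jointly Gaussian across the receiver population.

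Next, because the single-commodity setting forces all receivers to share the same route set, (\ref{eq:rprob}) says that agent $i$ picks route $r$ precisely when the $(M-1)$-vector of pairwise cost differences $\vect{z}_{i,r}:=(y_{i,r}-y_{i,r'})_{r'\neq r}$ is componentwise non-positive. As a linear image of the Gaussian $\vect{y}_i$, the vector $\vect{z}_{i,r}$ is itself multivariate Gaussian, and identifying its mean and covariance with the quantities $\vect{\mu}_r,\vect{\Sigma}_r$ of (\ref{eq:paramr}) gives $\Pr[a_i=r]=cdf(\vect{0};\vect{\mu}_r,\vect{\Sigma}_r)$. Summing the i.i.d.\ route indicators through (\ref{eq:rflow}) then yields $h_r=n\cdot cdf(\vect{0};\vect{\mu}_r,\vect{\Sigma}_r)$, and applying (\ref{eq:edgef}) produces the claimed edge-flow expression.

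The main obstacle will be the algebraic bookkeeping: composing the three affine maps (Bayesian posterior mean, route-cost aggregation, pairwise differencing) so that the resulting $\vect{\mu}_r,\vect{\Sigma}_r$ align exactly with the definitions in (\ref{eq:paramr}), and verifying that the differencing transformation does not collapse the covariance to a singular matrix that would void the CDF evaluation. A secondary subtlety is that for finite atomic $n$ the realized flow is random, so the displayed formula must be read either in expectation or under a large-population limit of (\ref{eq:rflow}); I will state this interpretation explicitly before invoking the aggregation step.
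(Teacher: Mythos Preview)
Your proposal is correct and follows essentially the paper's own argument: Gaussian conjugate update, push-forward through the affine edge- and route-cost maps, reduction of the $\argmin$ event to an orthant probability on a Gaussian vector of pairwise cost differences (the paper introduces an explicit comparison matrix $\vect{B}_r$ for this step), CDF evaluation, and law-of-large-numbers aggregation through (\ref{eq:rflow}) and (\ref{eq:edgef}). The only presentational difference is that the paper first marginalizes $p_i(\vect{f}\mid\vect{s})$ over $\vect{s}\sim p(\vect{s})$ and then over $\vect{\mu}_{0i}$ to obtain a single predictive Gaussian $N(\vect{f};\overline{\vect{\mu}},\overline{\vect{\Sigma}})$ before transforming to $p(\vect{\phi})$, whereas you track the expected-cost vector $\vect{y}_i$ directly as an affine image of the Gaussian pair $(\vect{\mu}_{0i},\vect{s})$; your large-$n$ caveat is exactly the paper's reading at (\ref{eq:hrsc}).
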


\begin{proof}
	When a receiver receives a signal $\vect{s}$, the posterior belief of the receiver is calculated from (\ref{eq:prior}) and (\ref{eq:pisf}) by Bayes' law (\ref{eq:post}). The summary of Bayesian inference for linear Gaussian model is described in Appendix \ref{sec:gaussbayes}. The posterior belief is denoted by
	\begin{equation}
	p_{i}(\vect{f}|\vect{s})=N(\vect{f};\vect{\mu}_{is},\vect{\Sigma}),
	\end{equation}
	where
	\begin{equation}
	\left.
	\begin{array}{l}
	\vect{\mu}_{is}=\vect{\Sigma}(\vect{\Sigma}_{s}^{-1}\vect{s}+\vect{\Sigma}_{0}^{-1}\vect{\mu}_{0i}) \\
	\vect{\Sigma}=(\vect{\Sigma}_{s}^{-1}+\vect{\Sigma}_{0}^{-1})^{-1}.
	\end{array}
	\right.
	\end{equation}
	The signal $\vect{s}$ follows the distribution given in (\ref{eq:pis}). Hence, $\vect{s}$ in the posterior belief above is marginalized out as
	\begin{equation}
	p_{i}(\vect{f})=\int p_{i}(\vect{f}|\vect{s})p(\vect{s}) \mathrm{d}\vect{s}=N(\vect{f};\vect{\mu}_{i},\vect{\Sigma}_{1})
	\end{equation}
	where
	\begin{equation}
	\left.
	\begin{array}{l}
	\vect{\mu}_{i}=\vect{\Sigma}\vect{\Sigma}_{s}^{-1}\vect{\mu}_{s}+\vect{\Sigma}\vect{\Sigma}_{0}^{-1}\vect{\mu}_{0i} \\
	\vect{\Sigma}_{1}=\vect{\Sigma}+\vect{\Sigma}(\vect{\Sigma}\vect{\Sigma}_{s}^{-1})^T.
	\end{array}
	\right.
	\end{equation}
	The mean parameter of the prior $\vect{\mu}_{0i}$ follows the distribution in (\ref{eq:pu0i}). Hence, $\vect{\mu}_{0i}$ in the above belief is also marginalized out as
	\begin{equation}
	p(\vect{f})=\int p_{i}(\vect{f})p(\vect{\mu}_{0i})\mathrm{d}\vect{\mu}_{0i}=N(\vect{f};\overline{\vect{\mu}},\overline{\vect{\Sigma}}),
	\label{eq:px}
	\end{equation}
	where
	\begin{equation}
	\left.
	\begin{array}{l}
	\overline{\vect{\mu}}=\vect{\Sigma}\vect{\Sigma}_{s}^{-1}\vect{\mu}_{s}+\vect{\Sigma}\vect{\Sigma}_{0}^{-1}\vect{\mu}_{h} \\
	\overline{\vect{\Sigma}}=\vect{\Sigma}_{1}+(\vect{\Sigma}\vect{\Sigma}_{0}^{-1})\vect{\Sigma}_{h}(\vect{\Sigma}\vect{\Sigma}_{0}^{-1})^T.
	\end{array}
	\right.
	\end{equation}	
	With an affine cost function of $\vect{c}(\vect{f})=\vect{\Lambda}\vect{f}+\vect{b}$, the distribution of the edge cost is then
	\begin{equation}
	p(\vect{c})=N(\vect{c};\vect{\Lambda}\overline{\vect{\mu}}+\vect{b},\vect{\Lambda}\overline{\vect{\Sigma}}\vect{\Lambda}^T).
	\end{equation}
	From (\ref{eq:routec}), the distribution of the route cost becomes 
	\begin{equation}
	p(\vect{\phi})=N(\vect{\phi};\vect{\mu}_{\phi},\vect{\Sigma}_{\phi}),
	\end{equation}
	where
	\begin{equation}
	\left.
	\begin{array}{l}
	\vect{\mu}_{\phi}=(\vect{\Lambda}\overline{\vect{\mu}}+\vect{b})\vect{D} \\
	\vect{\Sigma}_{\phi}=\vect{D}^T(\vect{\Lambda}\overline{\vect{\Sigma}}\vect{\Lambda}^T)\vect{D}.
	\end{array}
	\right.
	\end{equation}	
	The receivers' choices are based on this cost distribution. Let $\mu_{\phi,r}$ be the element of $\vect{\mu}_{\phi}$ for route $r$. The receiver's problem (\ref{eq:rprob}) is then rewritten as
	\begin{equation}
	a_{i} = \argmin_{r \in A_{0}} \mu_{\phi,r}.
	\label{eq:rprob2}
	\end{equation}
	Since this is a single-commodity game, the strategy set of all receivers is identical, denoted by $A_{0}$. The probability of a receiver to selecting route $r$ is $p_{r}=P(a_{i}=r)=P(\mu_{\phi,r} \leq \mu_{\phi,{-r}})$ where $-r=\{r':r' \in A_{0}, r' \neq r\}$ denotes all routes except for $r$. Here, let $\vect{B}_{r}=[\beta_{j,l}]$ denote a matrix for cost comparison between route $r$ and other routes such that
	\begin{equation}
	\beta_{j,l}= \left \{
	\begin{array}{lc}
	1 & if~l = r, \\
	-1 & if~l=-r[j] \\
	0 & otherwise
	\end{array}
	\right.
	\end{equation}
	where $j \in [|A_{0}|-1], l \in [|A_{0}|]$.
	Then, $p_{r}$ is denoted with this matrix as  $p_{r}=P(\vect{B}_{r}\vect{\phi} \leq 0)$. The distribution of differential cost is
	\begin{equation}
	p(\vect{B}_{r}\vect{\phi})=N(\vect{B}_{r}\vect{\phi};\vect{\mu}_{r},\vect{\Sigma}_{r}),
	\end{equation}
	where
	\begin{equation}
	\left.
	\begin{array}{l}
	\vect{\mu}_{r}=\vect{B}_{r}\vect{\mu}_{\phi} \\
	\vect{\Sigma}_{r}=\vect{B}_{r}\vect{\Sigma}_{\phi}\vect{B}_{r}^T.
	\end{array}
	\right.
	\label{eq:paramr}
	\end{equation}
	Then the probability of route choice is
	\begin{equation}
	p_{r}=P(\vect{B}_{r}\phi \leq 0) = cdf(\vect{0};\vect{\mu}_{r},\vect{\Sigma}_{r}).
	\label{eq:pr}
	\end{equation}
	Meanwhile, the outcome flow of route $r$ is calculated from (\ref{eq:rflow}). If the number of receivers $n$ is sufficiently large, the law of large numbers gives
	\begin{equation}
	\left.
	\begin{array}{l}
	\hat{h}_{r}=\sum_{i \in N}x_{ir}=n\mathbb{E}[x_{ir}]=nP\{a_{i}=r\}=np_{r} \\
	=n*cdf(\vect{0};\vect{\mu}_{r},\vect{\Sigma}_{r}).
	\end{array}
	\right.
	\label{eq:hrsc}
	\end{equation}
	Accordingly, the flow of all routes is $\hat{\vect{h}}=n[cdf(\vect{0};\vect{\mu}_{r},\vect{\Sigma}_{r})]$. From (\ref{eq:edgef}), (\ref{eq:fsc}) is obtained. This completes the proof of Lemma \ref{thm:fsc}.
	\qed
\end{proof}

For multi-commodity games, the following lemma gives the outcome prediction.

\begin{lemma}
	The outcome flow of multi-commodity with affine cost functions is given by \\
	\begin{equation}
	\hat{\vect{f}}=\vect{D}[cdf(\vect{0};\vect{\mu}_{rk},\vect{\Sigma}_{rk})]\vect{n},
	\label{eq:fmc}
	\end{equation}
	where $\vect{n}=(n_{1},...,n_{|\mc{K}|})^T$, and $\vect{\mu}_{rk}$ and $\vect{\Sigma}_{rk}$ are defined in (\ref{eq:paramrk}).
	\label{thm:fmc}
\end{lemma}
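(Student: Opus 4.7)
The plan is to mirror the proof of Lemma~\ref{thm:fsc} step by step, because the upstream chain of Gaussian marginalizations that produces the cost distribution $p(\vect{\phi}) = N(\vect{\phi}; \vect{\mu}_{\phi}, \vect{\Sigma}_{\phi})$ does not depend on the commodity structure: the prior (\ref{eq:prior}), the signal commitment (\ref{eq:pisf}), the signal distribution (\ref{eq:pis}), the prior-mean distribution (\ref{eq:pu0i}), and the affine cost $\vect{c}(\vect{f}) = \vect{\Lambda}\vect{f} + \vect{b}$ are all common to every agent, so the derivations of $p_i(\vect{f}|\vect{s})$, $p_i(\vect{f})$, $p(\vect{f})$, $p(\vect{c})$, and $p(\vect{\phi})$ carry over verbatim. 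I would state this explicitly and reuse $\vect{\mu}_{\phi}$ and $\vect{\Sigma}_{\phi}$ from the previous proof without redoing the algebra.

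The essential departure from the single-commodity case lies in the receiver's decision. An agent $i$ with OD pair $k_i = k$ is restricted to $\mc{A}_i = \mc{R}_k$, so (\ref{eq:rprob}) reduces to $a_i = \argmin_{r \in \mc{R}_k} \mu_{\phi, r}$ rather than $\argmin$ over a single common strategy set. For each OD pair $k$ and each $r \in \mc{R}_k$, I would introduce a comparison matrix $\vect{B}_{rk}$ whose rows encode the differences between the cost of $r$ and the cost of every other route in $\mc{R}_k \setminus \{r\}$, in direct analogy with $\vect{B}_r$ in the previous proof. The resulting differential cost vector is Gaussian with parameters
\begin{equation}
\vect{\mu}_{rk} = \vect{B}_{rk} \vect{\mu}_{\phi}, \qquad \vect{\Sigma}_{rk} = \vect{B}_{rk} \vect{\Sigma}_{\phi} \vect{B}_{rk}^T,
\label{eq:paramrk}
\end{equation}
so the probability that an agent of type $k$ selects route $r$ is $p_{rk} = cdf(\vect{0}; \vect{\mu}_{rk}, \vect{\Sigma}_{rk})$, exactly as in (\ref{eq:pr}).

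Since the $n_k$ agents sharing OD pair $k$ draw their prior means i.i.d.\ from (\ref{eq:pu0i}) and therefore make i.i.d.\ route choices with probabilities $\{p_{rk}\}_{r \in \mc{R}_k}$, the law of large numbers applied \emph{separately for each commodity} yields $\hat{h}_r = n_k \, p_{rk}$ for each $r \in \mc{R}_k$. I would assemble these per-commodity flows into the $M \times |\mc{K}|$ matrix $[cdf(\vect{0}; \vect{\mu}_{rk}, \vect{\Sigma}_{rk})]$, whose $(r,k)$ entry is $p_{rk}$ when $r \in \mc{R}_k$ and zero otherwise; multiplying this matrix by $\vect{n} = (n_1, \dots, n_{|\mc{K}|})^T$ collects the total flow per route into $\hat{\vect{h}}$, and substituting into (\ref{eq:edgef}) via $\vect{D}$ gives the claimed formula (\ref{eq:fmc}).

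The main obstacle is essentially bookkeeping: fixing a convention for the rows of $\vect{B}_{rk}$, for the $(r,k)$ entries of the cdf matrix, and for the convention that $p_{rk} = 0$ when $r \notin \mc{R}_k$ so that the matrix product $[cdf] \, \vect{n}$ reproduces exactly $\sum_{k : r \in \mc{R}_k} n_k p_{rk}$ on each route. No new probabilistic content appears beyond Lemma~\ref{thm:fsc}, since the common prior and the signal commitment leave $p(\vect{\phi})$ unchanged across commodities and the LLN applies independently within each commodity class.
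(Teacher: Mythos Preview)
Your proposal is correct and follows essentially the same approach as the paper: reuse the route-cost distribution $p(\vect{\phi})$ from Lemma~\ref{thm:fsc}, introduce per-commodity comparison matrices $\vect{B}_{rk}$ to obtain $p_{rk}=cdf(\vect{0};\vect{\mu}_{rk},\vect{\Sigma}_{rk})$, aggregate route flows as $\hat{\vect{h}}=[p_{rk}]\vect{n}$, and apply (\ref{eq:edgef}). Your version is in fact more explicit than the paper's about the law-of-large-numbers step and the $(r,k)$-matrix bookkeeping, but the underlying argument is identical.
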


\begin{proof}
	Since receiver's decision depends only on his prior and the signal from the sender, the probability of route choice $p_{rk}$ for each pair $k$ is calculated independently and similarly as in (\ref{eq:pr}) such that
	\begin{equation}
	p_{rk}=P(\vect{B}_{rk}\phi \leq 0) = cdf(\vect{0};\vect{\mu}_{rk},\vect{\Sigma}_{rk}),
	\label{eq:prk}
	\end{equation}
	where
	\begin{equation}
	\left.
	\begin{array}{l}
	\vect{\mu}_{rk}=\vect{B}_{rk}\vect{\mu}_{\phi} \\
	\vect{\Sigma}_{rk}=\vect{B}_{rk}\vect{\Sigma}_{\phi}\vect{B}_{rk}^T.
	\end{array}
	\right.
	\label{eq:paramrk}
	\end{equation}
	Then, the outcome flow of all routes is calculated by summing up the probabilities
	\begin{equation}
	\hat{\vect{h}}=[p_{rk}]\vect{n}.
	\end{equation}
	From (\ref{eq:edgef}), (\ref{eq:fmc}) is obtained. This completes the proof of Lemma \ref{thm:fmc}.
	\qed
\end{proof}

Lemma \ref{thm:fmc} immediately yields the following proposition regarding the self-fulfilling condition.

\begin{proposition}
	In the case of a network congestion game with Gaussian signal distribution (\ref{eq:pis}), the following condition is equivalent to the self-fulfilling condition (\ref{eq:sf}).
	\begin{equation}
	\vect{\mu}_{s}=\vect{D}[cdf(\vect{0};\vect{\mu}_{rk},\vect{\Sigma}_{rk})]\vect{n}
	\label{eq:sfconc}
	\end{equation}
	\label{thm:sf}
\end{proposition}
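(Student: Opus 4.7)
The plan is to reduce the self-fulfilling condition to a single equation on the mean of the sender's signal distribution, then substitute the outcome flow using Lemma~\ref{thm:fmc}. The key observation is that the sender's actual sampling distribution $p(\vect{s})=N(\vect{s};\vect{\mu}_{s},\vect{\Sigma}_{s})$ in (\ref{eq:pis}) and the committed conditional $\pi(\vect{s}|\hat{\vect{f}})=N(\vect{s};\hat{\vect{f}},\vect{\Sigma}_{s})$ in (\ref{eq:pisf}) are Gaussian densities with an identical covariance matrix $\vect{\Sigma}_{s}$, so that their equality as distributions of $\vect{s}$ collapses to equality of their mean vectors.

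First I would unpack Definition~\ref{def:sf}: condition (\ref{eq:sf}) demands that the signal realized by the sender, which by construction is drawn from $p(\vect{s})$, be distributed according to the committed conditional $\pi(\vect{s}|\hat{\vect{f}})$ evaluated at the realized outcome flow $\hat{\vect{f}}$. Thus the self-fulfilling condition is equivalent to the density identity $p(\vect{s})=\pi(\vect{s}|\hat{\vect{f}})$. Second, since both densities are Gaussian and share the covariance $\vect{\Sigma}_{s}$, this identity holds if and only if their means coincide, i.e., $\vect{\mu}_{s}=\hat{\vect{f}}$. Third, I would invoke Lemma~\ref{thm:fmc}, which supplies the closed form $\hat{\vect{f}}=\vect{D}[cdf(\vect{0};\vect{\mu}_{rk},\vect{\Sigma}_{rk})]\vect{n}$ for the outcome flow in the multi-commodity, affine-cost, Gaussian-prior setting. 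Substituting this expression for $\hat{\vect{f}}$ into $\vect{\mu}_{s}=\hat{\vect{f}}$ yields (\ref{eq:sfconc}) at once, and because each step is an if-and-only-if, the equivalence holds in both directions.

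The main obstacle is conceptual rather than computational: one must carefully distinguish the sender's unconditional sampling law $p(\vect{s})$ from the committed conditional law $\pi(\vect{s}|\vect{f})$, and recognize that the self-fulfilling condition is precisely the statement that these two laws coincide when $\vect{f}$ is replaced by the realized outcome $\hat{\vect{f}}$. Once this interpretation is fixed, the Gaussian structure trivializes the distributional equality to a mean equation, and Lemma~\ref{thm:fmc} supplies the explicit form of the right-hand side, leaving no further technical work.
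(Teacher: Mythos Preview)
Your proposal is correct and follows essentially the same approach as the paper: reduce the self-fulfilling condition to the equality of two Gaussian laws with common covariance $\vect{\Sigma}_{s}$, conclude $\vect{\mu}_{s}=\hat{\vect{f}}$, and then invoke Lemma~\ref{thm:fmc} to obtain (\ref{eq:sfconc}). The paper's argument is in fact slightly terser, but the logical structure is identical.
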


\begin{proof}
	In the case of Gaussian distribution, the signal distribution conditional on the outcome is denoted by $\pi(\vect{s}|\hat{\vect{f}})=N(\vect{s};\hat{\vect{f}},\vect{\Sigma}_{s})$. The signal actually follows the distribution in (\ref{eq:pis}). Then, the self-fulfilling condition (\ref{eq:sf}) in the Gaussian case is denoted by
	\begin{equation}
	\vect{\mu}_{s} = \hat{\vect{f}}.
	\label{eq:sfgauss}
	\end{equation}
	With Lemma \ref{thm:fmc}, (\ref{eq:sfconc}) is obtained. This completes the proof of Proposition \ref{thm:sf}.
	\qed
\end{proof}

Now we can obtain a self-fulfilling Gaussian signal as a solution to equation (\ref{eq:sfconc}) in the case of known priors. The existence of a solution is guaranteed by the following proposition.

\begin{proposition}
	Regarding (\ref{eq:sfconc}) as a self-map
	\begin{equation}
	\vect{\mu}_{s}=\vect{g}(\vect{\mu}_{s})=\vect{D}[cdf(\vect{0};\vect{\mu}_{rk},\vect{\Sigma}_{rk})]\vect{n},
	\label{eq:smap}
	\end{equation}
	there is a solution (fixed point $\vect{\mu}_{s0}$) such that $\vect{\mu}_{s0}=\vect{g}(\vect{\mu}_{s0})$.
	\label{thm:fpexist}
\end{proposition}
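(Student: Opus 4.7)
The plan is to establish existence by Brouwer's fixed point theorem. For this I need a nonempty compact convex set $K\subset\mc{F}$ that is mapped into itself by $\vect{g}$, together with continuity of $\vect{g}$ on $K$.

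First I would pin down the codomain of $\vect{g}$ to obtain the compact convex set. Since every component of the vector $[cdf(\vect{0};\vect{\mu}_{rk},\vect{\Sigma}_{rk})]$ lies in $[0,1]$, each route-flow entry $p_{rk}n_{k}$ belongs to $[0,n_{k}]$, so every edge-flow entry of $\vect{D}[cdf(\vect{0};\vect{\mu}_{rk},\vect{\Sigma}_{rk})]\vect{n}$ is nonnegative and bounded above by $N:=\sum_{k\in\mc{K}}n_{k}$. Consequently the image of $\vect{g}$ lies in the box $K:=[0,N]^{m}\subset\mc{F}$, which is nonempty, compact, and convex; in particular $\vect{g}(K)\subseteq K$, so $\vect{g}$ restricted to $K$ is a genuine self-map.

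Next I would verify continuity of $\vect{g}$ on $K$. Tracing the construction in the proof of Lemma~\ref{thm:fsc}, the matrices $\vect{\Sigma}$, $\vect{\Sigma}_{1}$, $\overline{\vect{\Sigma}}$, $\vect{\Sigma}_{\phi}$, and hence each $\vect{\Sigma}_{rk}=\vect{B}_{rk}\vect{\Sigma}_{\phi}\vect{B}_{rk}^{T}$, are built from the fixed inputs $\vect{\Sigma}_{0},\vect{\Sigma}_{s},\vect{\Sigma}_{h},\vect{\Lambda},\vect{D},\vect{B}_{rk}$ and do not depend on $\vect{\mu}_{s}$ at all. Only the mean $\vect{\mu}_{rk}=\vect{B}_{rk}(\vect{\Lambda}\overline{\vect{\mu}}+\vect{b})\vect{D}$ depends on $\vect{\mu}_{s}$, and this dependence is affine via $\overline{\vect{\mu}}=\vect{\Sigma}\vect{\Sigma}_{s}^{-1}\vect{\mu}_{s}+\vect{\Sigma}\vect{\Sigma}_{0}^{-1}\vect{\mu}_{h}$. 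Since the multivariate normal cdf is a smooth function of its mean argument when the covariance is positive definite, the composition $\vect{\mu}_{s}\mapsto [cdf(\vect{0};\vect{\mu}_{rk},\vect{\Sigma}_{rk})]$ is continuous, and multiplication by the fixed $\vect{D}$ and $\vect{n}$ preserves continuity. Therefore $\vect{g}:K\to K$ is continuous, and Brouwer's fixed point theorem yields a $\vect{\mu}_{s0}\in K$ with $\vect{\mu}_{s0}=\vect{g}(\vect{\mu}_{s0})$.

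The step most likely to require care is the continuity argument, specifically ensuring that each covariance $\vect{\Sigma}_{rk}$ inherited through the Bayesian update remains positive definite so that the cdf is well-defined and continuous; this hinges on positive-definiteness assumptions on $\vect{\Sigma}_{0}$, $\vect{\Sigma}_{s}$, and $\vect{\Sigma}_{h}$ together with the rank conditions on $\vect{\Lambda}$, $\vect{D}$, and $\vect{B}_{rk}$. If a $\vect{\Sigma}_{rk}$ happens to be only positive semidefinite, the cdf remains continuous but must be interpreted as a degenerate Gaussian measure; either way the Brouwer argument goes through on the same compact convex box $K$.
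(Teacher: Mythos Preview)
Your argument is correct and follows the same strategy as the paper: both apply Brouwer's fixed point theorem to a compact convex set on which $\vect{g}$ is a continuous self-map. The paper's version is terser---it takes the set to be the image under $\vect{D}$ of the box of feasible route flows rather than your enclosing box $[0,N]^{m}$, and it omits the continuity verification you spell out---so your treatment is, if anything, more careful than the original.
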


\begin{proof}
	$\vect{g}$ is a self-map $\vect{g}:\mc{F} \rightarrow \mc{F}$. According to (\ref{eq:edgef}), $\mc{F}$ is a linear transformation of $M$-dimensional box $\mc{H}$ and then convex compact. Then, by Brouwer's fixed-point theorem, the self-map (\ref{eq:smap}) has a fixed point. This completes the proof of Proposition \ref{thm:fpexist}.
	\qed
\end{proof}

\subsection{Repeated Games: Unknown Prior Case}
\label{sec:ukp}

Without knowing receivers’ priors, the sender cannot design a self-fulfilling signal based on Proposition \ref{thm:sf}. In this case, the sender has to adjust signals to be self-fulfilling in repeated games. Again, we assume a Gaussian signal but using the observation of the last outcome,

\begin{equation}
\left.
\begin{array}{l}
p(\vect{s}^{t}|\vect{f}^{t-1})=N(\vect{s}^{t};\vect{\mu}_{s}^t,\vect{\Sigma}_{s}) \\
\vect{\mu}_{s}^t=\vect{f}^{t-1}.
\end{array}
\right.
\label{eq:pisrp}
\end{equation}

The sender's commitment (\ref{eq:pisf}) is written as

\begin{equation}
\pi(\vect{s}^{t}|\vect{f}^{t})=N(\vect{s}^{t};\vect{f}^{t},\vect{\Sigma}_{s}).
\label{eq:pisfrp}
\end{equation}

Then, our goal is to meet a Gaussian version of self-fulfilling condition (\ref{eq:sfgauss}), which can be rewritten as

\begin{equation}
\vect{\mu}_{s}^t = {\vect{f}}^{t}.
\label{eq:sfgaussrp}
\end{equation}

According to (\ref{eq:fmc}), the outcome flow of the current game is the function of the signal parameters, which is denoted by

\begin{equation}
\vect{f}^{t}=\vect{g}(\vect{\mu}_{s}^t)=\vect{D}[cdf(\vect{0};\vect{\mu}_{rk},\vect{\Sigma}_{rk})]\vect{n}.
\label{eq:fmcrp}
\end{equation}

Now, we have the following proposition about self-fulfilling signals in repeated games.

\begin{proposition}
	Let $\vect{\mu}_{s0}$ be a fixed point in a self-map $\vect{\mu}_{s}=\vect{g}(\vect{\mu}_{s})$. The following condition is sufficient for a signal to converge on $\vect{\mu}_{s0}$ and meet the self-fulfilling condition (\ref{eq:sfgaussrp}).
	\color{\cng}
	\begin{equation}
	\max |\lambda_{j}| < 1
	\label{eq:sfconcrp}
	\end{equation}
	where $\vect{J}=\frac{\partial \vect{g}}{\partial \vect{\mu}_{s}}$ is the Jacobian of $\vect{g}(\vect{\mu}_{s})$ and $\vect{\lambda}=(\lambda_{1},...,\lambda_{m})$ are eigenvalues of $\vect{J}(\vect{\mu}_{s0})$.
	\color{black}	
	\label{thm:sfrp}
\end{proposition}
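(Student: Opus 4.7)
The plan is to recognize that the update rule (\ref{eq:pisrp}) together with the outcome prediction (\ref{eq:fmcrp}) defines a closed-loop discrete-time dynamical system on $\mc{F}$, and then to invoke the standard local asymptotic stability theorem for hyperbolic fixed points whose Jacobian has spectral radius below one. First I would substitute $\vect{\mu}_s^t = \vect{f}^{t-1}$ into (\ref{eq:fmcrp}) to obtain the autonomous iteration $\vect{f}^t = \vect{g}(\vect{f}^{t-1})$, and observe that the Gaussian self-fulfilling condition (\ref{eq:sfgaussrp}) is equivalent to $\vect{f}^t = \vect{f}^{t-1}$, i.e.\ to the iterate being a fixed point of $\vect{g}$. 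Thus the problem reduces to showing that, under (\ref{eq:sfconcrp}), iterates starting near $\vect{\mu}_{s0}$ converge to $\vect{\mu}_{s0}$, at which point $\vect{\mu}_s^{t+1} = \vect{f}^t \to \vect{\mu}_{s0}$ as well and the discrepancy between signal mean and outcome flow vanishes in the limit.

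Next I would linearize $\vect{g}$ at the fixed point. Because $\vect{g}$ is a composition of an affine map in $\vect{\mu}_s$ with the multivariate Gaussian c.d.f.\ (see (\ref{eq:paramrk}) and (\ref{eq:pr})), it is $C^1$ on $\mc{F}$, so a Taylor expansion gives
\begin{equation}
\vect{f}^t - \vect{\mu}_{s0} = \vect{J}(\vect{\mu}_{s0})\bigl(\vect{f}^{t-1} - \vect{\mu}_{s0}\bigr) + o\bigl(\|\vect{f}^{t-1} - \vect{\mu}_{s0}\|\bigr).
\end{equation}
Hence the local convergence is governed by the spectrum of $\vect{J}(\vect{\mu}_{s0})$.

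The crux of the argument is a classical result from matrix analysis: for every matrix $\vect{J}$ and every $\epsilon>0$ there exists a vector norm $\|\cdot\|_\star$ whose induced operator norm satisfies $\|\vect{J}\|_\star \le \rho(\vect{J}) + \epsilon$, where $\rho$ denotes the spectral radius. Under hypothesis (\ref{eq:sfconcrp}) we have $\rho(\vect{J}(\vect{\mu}_{s0})) = \max_j|\lambda_j| < 1$, so I can pick $\epsilon$ such that $\|\vect{J}(\vect{\mu}_{s0})\|_\star \le q < 1$. By continuity of $\vect{J}$ and the remainder estimate above, there is a neighborhood $U$ of $\vect{\mu}_{s0}$ in which $\|\vect{g}(\vect{x}) - \vect{\mu}_{s0}\|_\star \le q'\|\vect{x}-\vect{\mu}_{s0}\|_\star$ for some $q'\in(q,1)$. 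Thus $\vect{g}$ is a contraction on $U$, and Banach's fixed-point theorem (or equivalently a geometric iteration bound) yields $\vect{f}^t \to \vect{\mu}_{s0}$ whenever $\vect{f}^0 \in U$. Combining this with $\vect{\mu}_s^{t} = \vect{f}^{t-1}$ shows that both sequences converge to $\vect{\mu}_{s0}$, so (\ref{eq:sfgaussrp}) is satisfied asymptotically, completing the proof.

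The main obstacle is the gap between spectral radius and operator norm: $\max|\lambda_j|<1$ does not directly yield $\|\vect{J}\|<1$ in the standard Euclidean sense, so I must go through the adapted-norm construction above (or equivalently through the Jordan form estimate $\|\vect{J}^t\| \le C(q+\epsilon)^t$). A secondary caveat worth flagging in the write-up is that the conclusion is a local result: convergence is guaranteed only for signals initialized in a basin of attraction of $\vect{\mu}_{s0}$, not globally, and the existence of at least one such fixed point is supplied separately by Proposition~\ref{thm:fpexist}.
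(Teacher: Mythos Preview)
Your proposal is correct and follows the same route as the paper: both reduce the problem to the autonomous iteration $\vect{f}^{t}=\vect{g}(\vect{f}^{t-1})$ via (\ref{eq:pisrp}) and (\ref{eq:fmcrp}), identify the self-fulfilling condition (\ref{eq:sfgaussrp}) with being at a fixed point, and conclude local convergence from the spectral condition on $\vect{J}(\vect{\mu}_{s0})$. The only difference is presentational: the paper simply invokes linearized stability theory by citation, whereas you unpack that citation via the adapted-norm/contraction argument, and you also make explicit the local (basin-of-attraction) caveat that the paper leaves implicit.
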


\begin{proof}
	From (\ref{eq:pisrp}) and (\ref{eq:fmcrp}), the outcome flow evolves via the following process
	\begin{equation}
	\vect{f}^{t}=\vect{g}(\vect{f}^{t-1}).
	\label{eq:smaprp}
	\end{equation}
	If the function $\vect{g}$ has an asymptotic stable fixed point, the outcome flow converges on that fixed point such that $\vect{f}^{t}=\vect{f}^{t-1}$. Hence with (\ref{eq:pisrp}), the self-fulfilling condition (\ref{eq:sfgaussrp}) is obtained.
	The sufficient condition that a fixed point $\vect{\mu}_{s0}$ of the function $\vect{g}$ becomes asymptotically stable is given by (\ref{eq:sfconcrp})
	\color{\cng}
	 according to the linearized stability theory \citep{stable}. 
	\color{black}
	 This completes the proof of Proposition \ref{thm:sfrp}.
	\qed
\end{proof}

In repeated games, a signal that satisfies the condition (\ref{eq:sfgaussrp}) converges on the fixed point $\vect{g}$ and becomes self-fulfilling. Otherwise, the outcome oscillation occurs. We will confirm this with an example in a later section.

\section{Examples}
\label{sec:exp}

\subsection{Minority Game}
\label{sec:mg}

In this section, we show a simple numerical example based on a minority game. A minority game \citep{minor} is a subclass of (network) congestion games, in which receivers have only two options $\mc{E}=\{1,2\}$. The game does not have graph structures, which means that $\vect{D}$ is a $2 \times 2$ identical matrix, $\mc{R}=\mc{A}_{i}=\mc{E}$ and $|\mc{K}|=1$. A game outcome can be denoted with a scalar variable of population rate $\omega \in [0,1]$ as $f_{1}=n\omega, f_{2}=n(1-\omega)$. The cost function is $\vect{c}(\omega)=(1-2\omega, 2\omega-1)$, which imposes more costs on majorities and ensures that minorities win. This simple example models situations such as the choice between two routes. The parameters of receivers’ beliefs and the sender's signal in (\ref{eq:prior}),(\ref{eq:pu0i}) and (\ref{eq:pis}) are also reduced as scalar variables $(\mu_{0i},\sigma^2_{0}),(\mu_{h},\sigma^2_{h})$ and $(\mu_{s},\sigma^2_{s})$, respectively. The sender aims to minimize the total cost of receivers, which is $|\vect{c}(\omega)|^2=-8(\omega-\omega^2)+2$. Hence, the sender’s utility can be denoted as the minus cost $v(\omega)=\omega-\omega^2$.

In this setting, the self-fulfilling condition (\ref{eq:sfconc}) in Proposition \ref{thm:sf} is reduced as follows:
\begin{equation}
\mu_{s}=cdf(0;\mu_{\phi},\sigma_{\phi}^2)
\label{eq:sfmg}
\end{equation}
where
\begin{equation}
\left.
\begin{array}{l}
\mu_{\phi}=2\sigma^2\left(\frac{\mu_{s}\strut}{\sigma^2_{s}\strut}+\frac{\mu_{h}\strut}{\sigma^2_{0}\strut}\right) \\
\sigma_{\phi}^2=4\left(\sigma^2_{1}+\frac{\sigma^4\strut}{\sigma^4_{0}\strut}\sigma^2_{h}\right) \\
\sigma^2=\frac{\sigma^2_{0}\sigma^2_{s}\strut}{\sigma^2_{0}+\sigma^2_{s}\strut} \\
\sigma^2_{1}=\sigma^2+\frac{\sigma^4\strut}{\sigma^2_{s}\strut}.
\end{array}
\right.
\end{equation}

The sufficient condition (\ref{eq:sfconcrp}) for self-fulfillment in the repeated game in Proposition \ref{thm:sfrp} is given in closed form as
\begin{equation}
\sigma^2_{s} \geq \frac{-4\pi\sigma^4_{0}+\sqrt{16\pi^2\sigma^8_{0}+8\pi(\sigma^2_{0}+\sigma^2_{h})\sigma^4_{0}}\strut}{4\pi(\sigma^2_{0}+\sigma^2_{h})\strut}.
\label{eq:sfrpmg}
\end{equation}
Note that the signal parameter in (\ref{eq:sfrpmg}) is only variance $\sigma^2_{s}$.

\subsubsection{Known Prior Case}
\label{sec:ekp}

Here, we show a simple numerical example of a known prior case. Parameter values are shown in Table \ref{tbl:param}. The mean parameter of receivers’ prior is $\mu_{h}=0.3$, which means that most receivers believe that route $1$ is relatively vacant and therefore preferable. In the case without signaling, receivers rush to route $1$ and the outcome percentage of the population on that route becomes $\hat{\omega}=0.85$.

\begin{table}[h]
	\centering
	\caption{Values of parameters}
	\begin{tabular}{|c|c|} \hline
		Parameter & Value \\ \hline \hline
		$\mu_{h} $ & $0.3$ \\ \hline
		$\sigma_{h}$ & $0.2$ \\ \hline
		$\sigma_{0}$ & $0.2$ \\ \hline
		$\sigma_{s}$ & $0.22$ \\ \hline
		$n$ & $81$ \\ \hline
		no. of signal samples & $150$ \\ \hline
	\end{tabular}
	\label{tbl:param}
\end{table}

The sender tries to mitigate congestion on route $1$ by sending signal $s \in [0,1]$ on endogenous variable $\omega$. The signal is designed to meet the self-fulfilling condition (\ref{eq:sfmg}) to suppress the outcome oscillations, which derives the mean parameter $\mu_{s}=0.62$. The resulting outcome becomes $\hat{\omega} = 0.63$, indicating that the signal meeting the self-fulfilling condition successfully realizes the outcome that equals itself with a slight sampling error. As long as the sender uses a self-fulfilling signal, the oscillations will always be suppressed.

Figure \ref{fig:efficiency} indicates a comparison of traffic efficiency $v(\hat{\omega})$ between the case with signal and the case without a signal. Network efficiency is increased by signaling, which means that traffic inefficiencies caused by biased priors is mitigated by sender’s self-fulfilling signal.

\begin{figure}[h]
	\centering
	\includegraphics[clip,width=7cm]{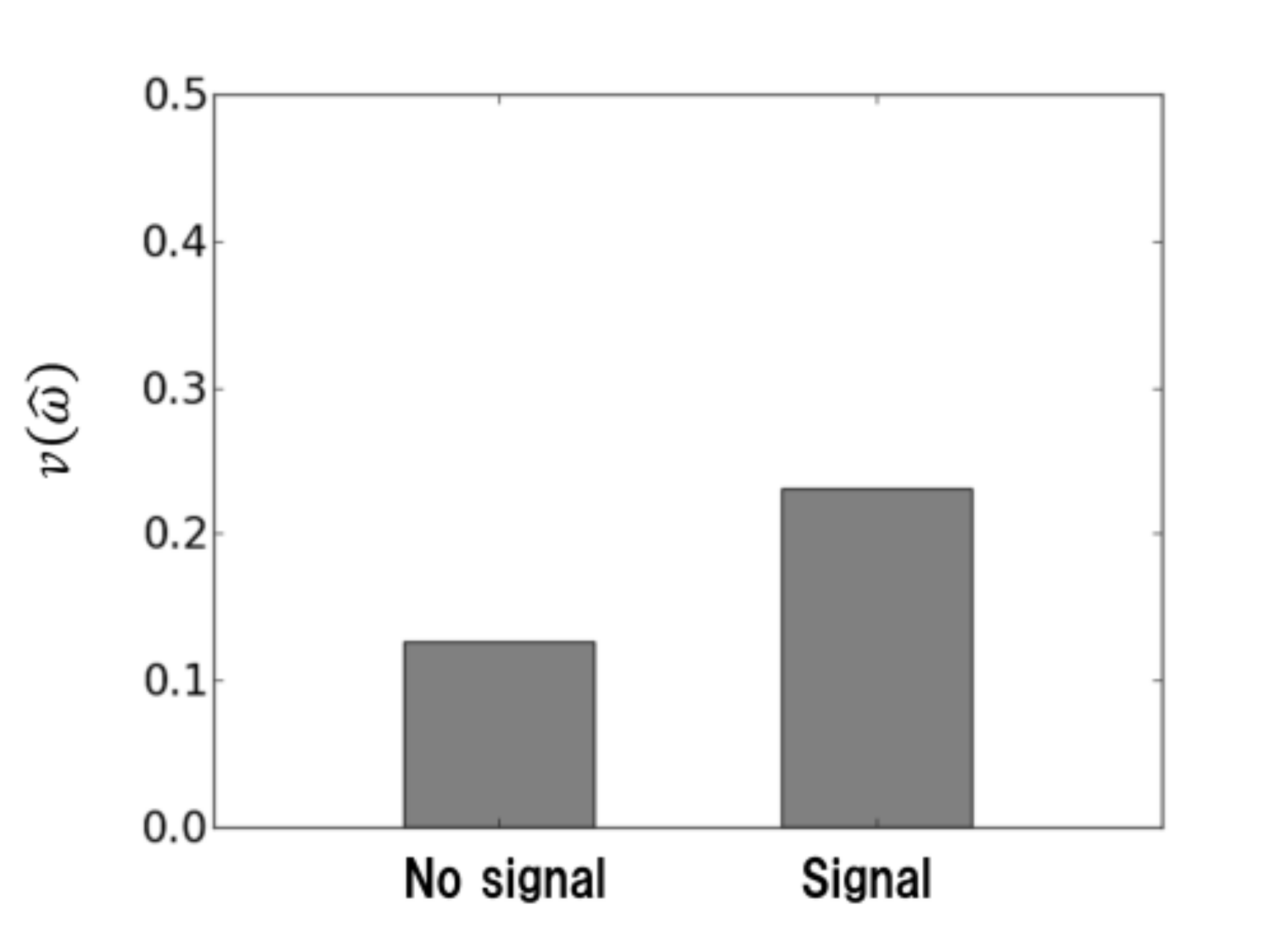}
	\capup
	\caption{Change in traffic efficiency by a self-fulfilling signal}
	\label{fig:efficiency}	
\end{figure}

\subsubsection{Unknown Prior Case}
\label{sec:eukp}

Here, we show a numerical example for an unknown prior case. We use the same parameter values as in Table \ref{tbl:param} except for signal deviation $\sigma_{s}$. Since the sender does not know receivers’ priors, the signal is determined through repeated games. We use $\mu_{s0}=0.3$ as an initial value for the signal mean.

The self-fulfilling condition (\ref{eq:sfrpmg}) is $\sigma_{s} \geq 0.2$. First, we tested $\sigma_{s}=0.13$, which does not meet the self-fulfilling condition. Figure \ref{fig:hunting} shows the evolution of outcome $\hat{\omega}$. As the theory indicates, the oscillation occurs due to the non-self-fulfilling signal.

\begin{figure}[h]
	\centering
	\includegraphics[clip,width=7cm]{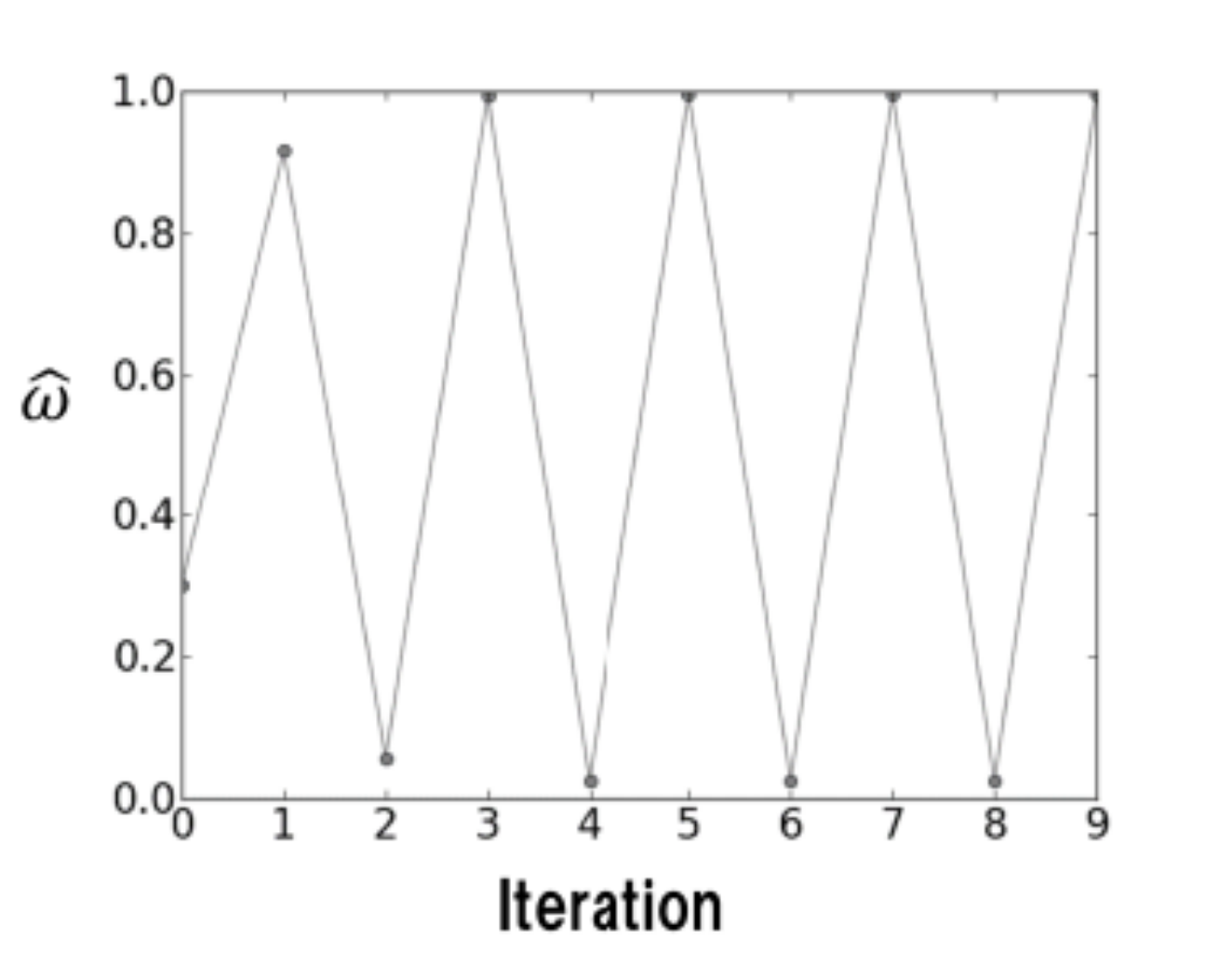}
	\capup
	\caption{Oscillation caused by non-self-fulfilling signal}
	\label{fig:hunting}	
\end{figure}

Next, we tested $\sigma_{s}=0.29$, which meets the self-fulfilling condition. Figure \ref{fig:stable} shows the result. As the theory indicates, the oscillations are suppressed and the outcome $\hat{\omega}$ converges on the same value of $\mu_{s}=0.62$, which is given by the self-fulfilling signal in the known prior case. Hence, the sender can suppress the oscillations and ensure that the outcome will converge even without knowing receivers' priors.

\begin{figure}[h]
	\centering
	\includegraphics[clip,width=7cm]{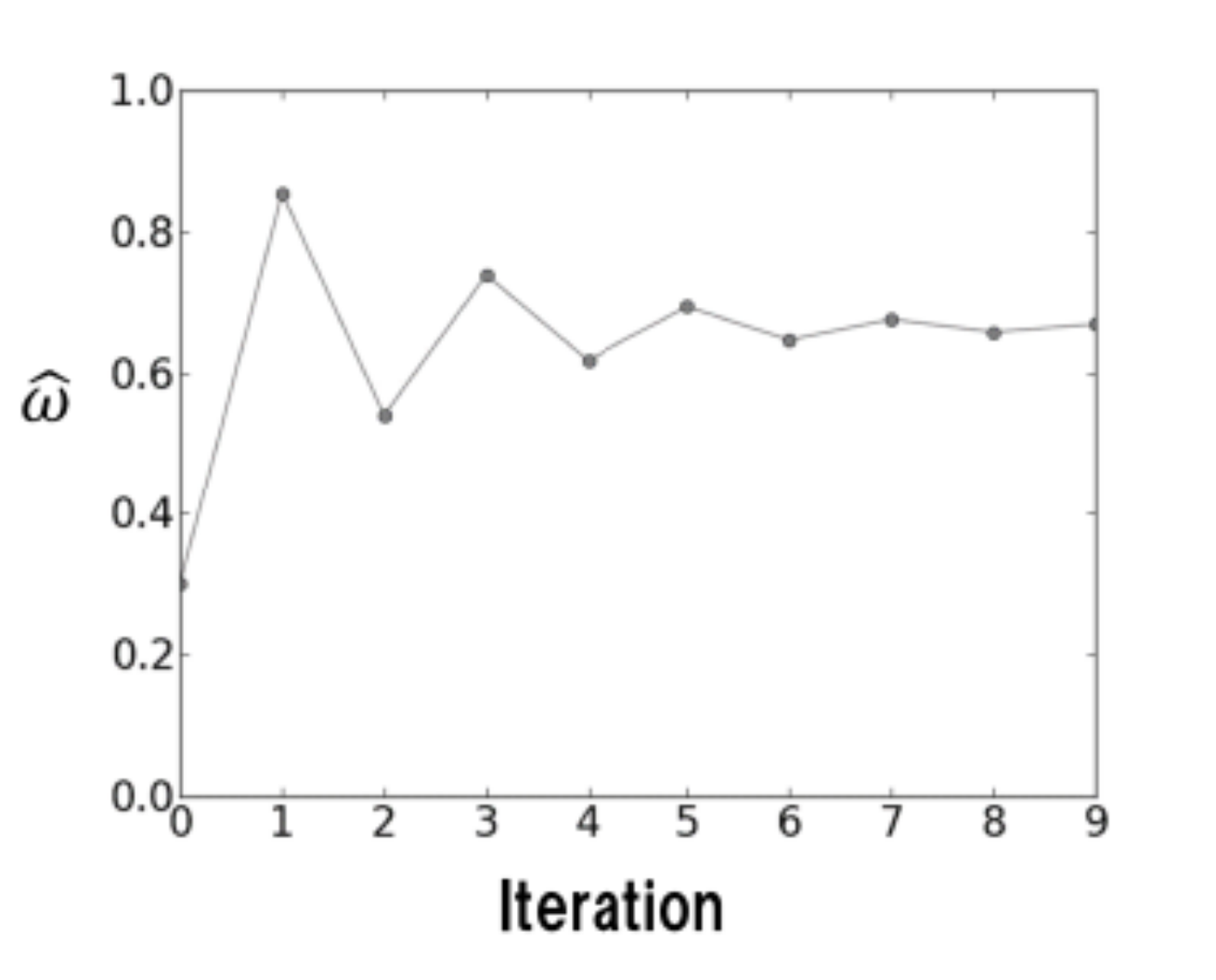}
	\capup
	\caption{Convergence of traffic led by a self-fulfilling signal}
	\label{fig:stable}	
\end{figure}

From a signaling point of view, the results are interesting because the partial disclosure of flow information as in Figure \ref{fig:stable} brings greater efficiency than in the no-disclosure case shown in Figure \ref{fig:efficiency} or in the full-disclosure case suffering from the outcome oscillations as in Figure \ref{fig:hunting}.

\color{\cng}
\subsection{Network Congestion Game}
\label{sec:encg}

In this section, we show another numerical example based on a network congestion game. Figure \ref{fig:nets} shows a bidirected graph $\mc{G}$ used in this example, which is a subset of the network used in \citep{network}
\color{\cngb}
and has $m=46$ edges.
\color{black}
There are $n=172$ players and $|\mc{K}|=14$ OD pairs 
\color{\cngb}
which are also shown in Figure \ref{fig:nets}.
\color{black}
$|\mc{R}_{k}|=5$ choices for each of the OD pairs, which are the first five unweighted shortest paths.
\color{\cngb}
We use a cost function $\vect{c}(\vect{f})=\vect{\Lambda}\vect{f}+\vect{b}$. The parameter $\vect{\Lambda}$ is a constant $m \times m$ diagonal matrix that has randomly generated elements $\Lambda_{ii} \in [2,3], i \in \{1 \dots m\}$. The parameter $\vect{b}$ is a constant vector that also has randomly generated elements $b_{i} \in [0,1]$.
\color{black}
The parameter $\vect{\Sigma}_{h}$ is generated by simulating random route choices 100 times. We let $\vect{\Sigma}_{0}=\vect{\Sigma}_{h}$. The parameter $\vect{\mu}_{h}$ is generated by letting driers play a full information repeated network congestion game that is stopped before converging to an equilibrium. 


\begin{figure}[h]
	\centering
	\includegraphics[clip,width=7cm]{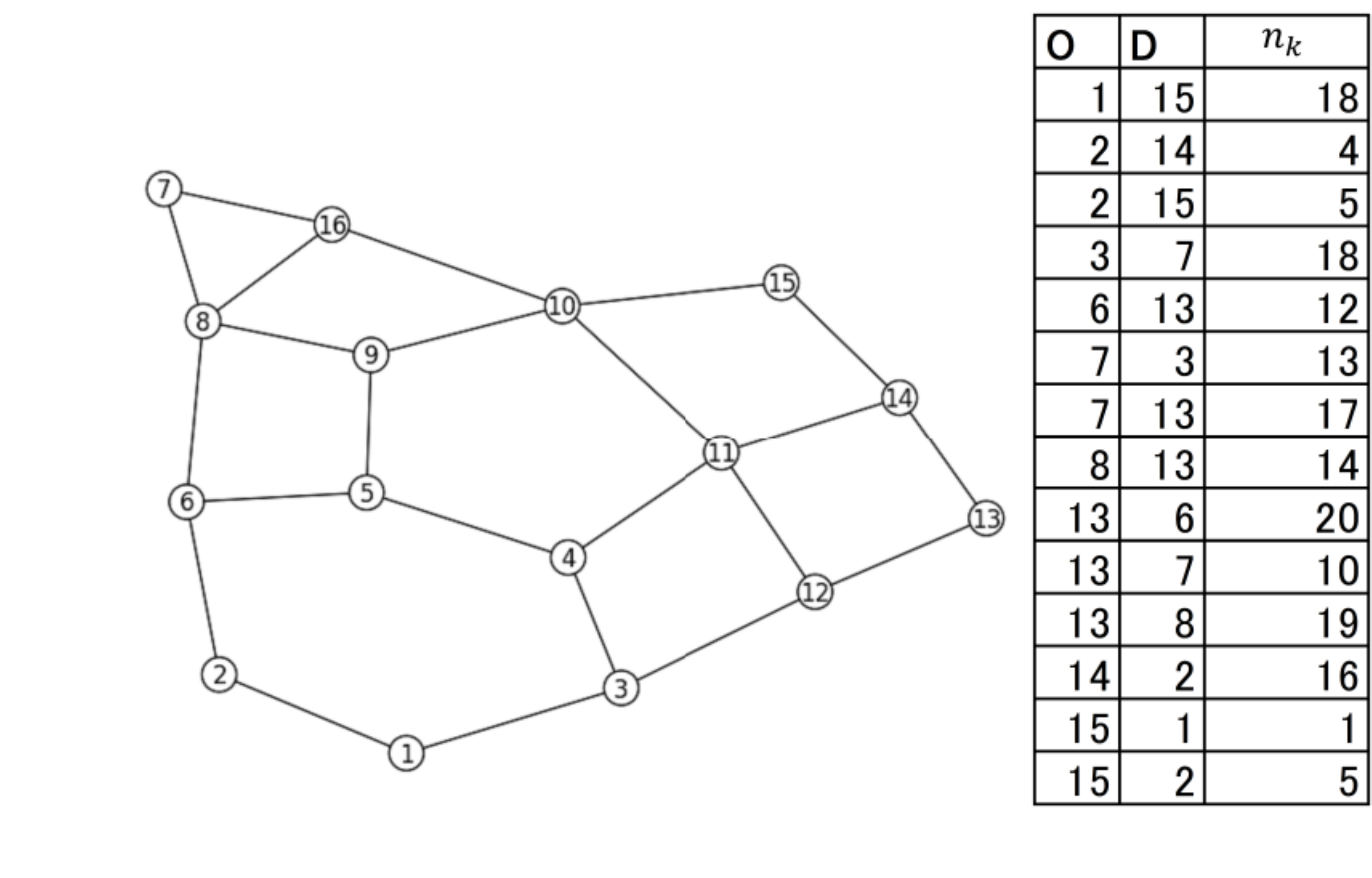}
	\capup
	\caption{Example network}
	\label{fig:nets}	
\end{figure}

In a manner similar to that in section~\ref{sec:eukp}, we change the signal parameter $\vect{\Sigma}_{s}$ and observe how traffic changes through the iterations. The initial $\vect{\mu}_{s}$ is randomly chosen and is updated by (\ref{eq:smap}). Since variance-covariance matrices are not full rank in this example, we use the dimension reduction technique in Appendix \ref{sec:gaussbayes} for calculations of inverse matrices.

Figure \ref{fig:unstablenet} shows the change of normalized traffic $\vect{f}^{t}/\|\vect{f}^{t}\|_{1}$ with unstable signal $\vect{\Sigma}_{s1}$ that is generated in the same way as $\vect{\Sigma}_{h}$. Since $\vect{f}^{t}$ is multi-dimensional, the figure shows the first component of the principal component analysis. The cumulative proportion of the first component is $18.3\%$. In this case, $\max |\lambda_{j}| = 2.2 > 1$, which does not satisfy the condition (\ref{eq:sfconcrp}) and then causes oscillation. Meanwhile, Figure \ref{fig:stablenet} shows the result with an asymptotically stable signal $\vect{\Sigma}_{s2}=9*\vect{\Sigma}_{s1}$. In this case, $\max |\lambda_{j}| = 0.65 < 1$ and the result quickly converges into a self-fulfilling traffic.

\begin{figure}[h]
	\centering
	\includegraphics[clip,width=7cm]{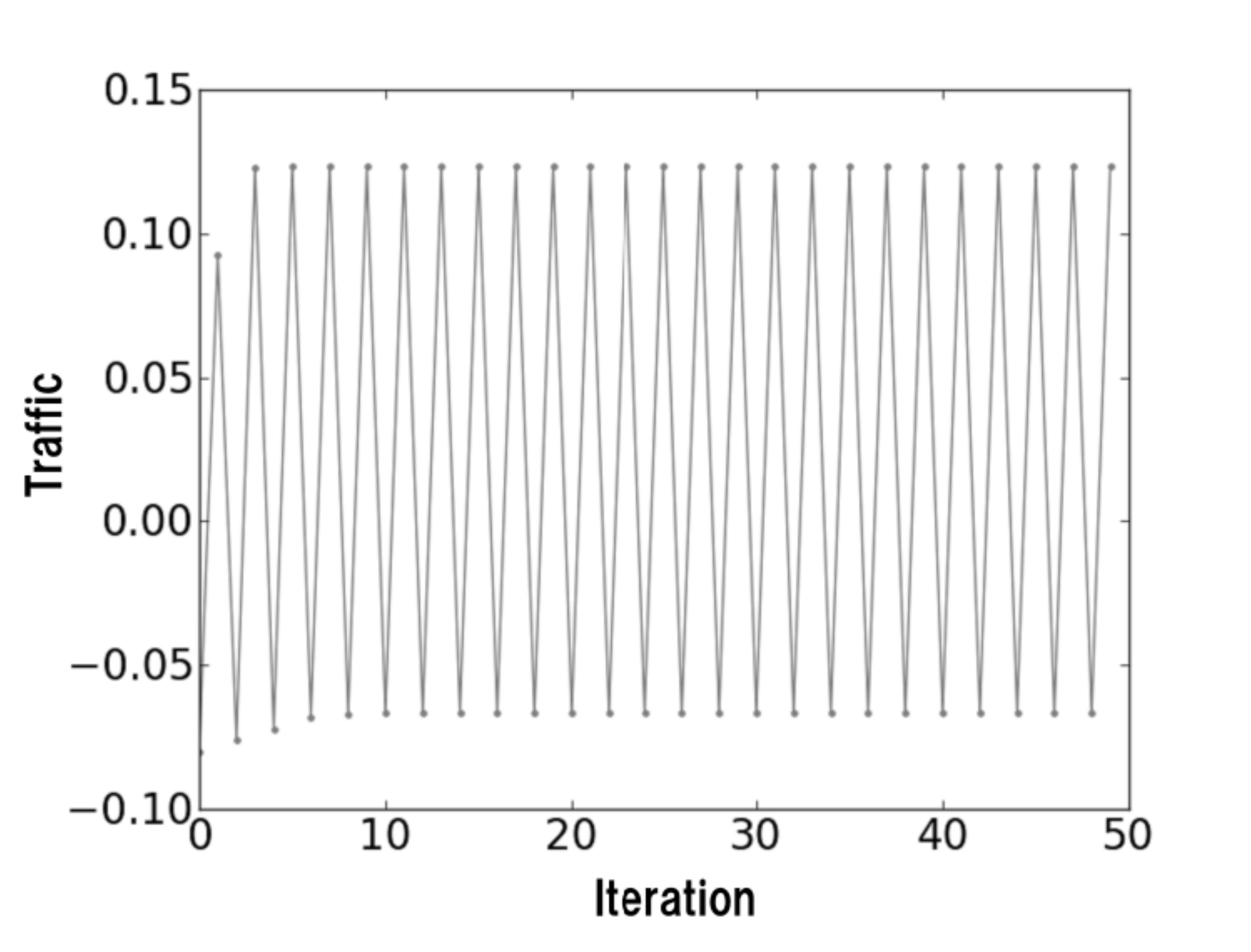}
	\capup
	\caption{Network oscillation caused by unstable signal}
	\label{fig:unstablenet}	
\end{figure}

\begin{figure}[h]
	\centering
	\includegraphics[clip,width=7cm]{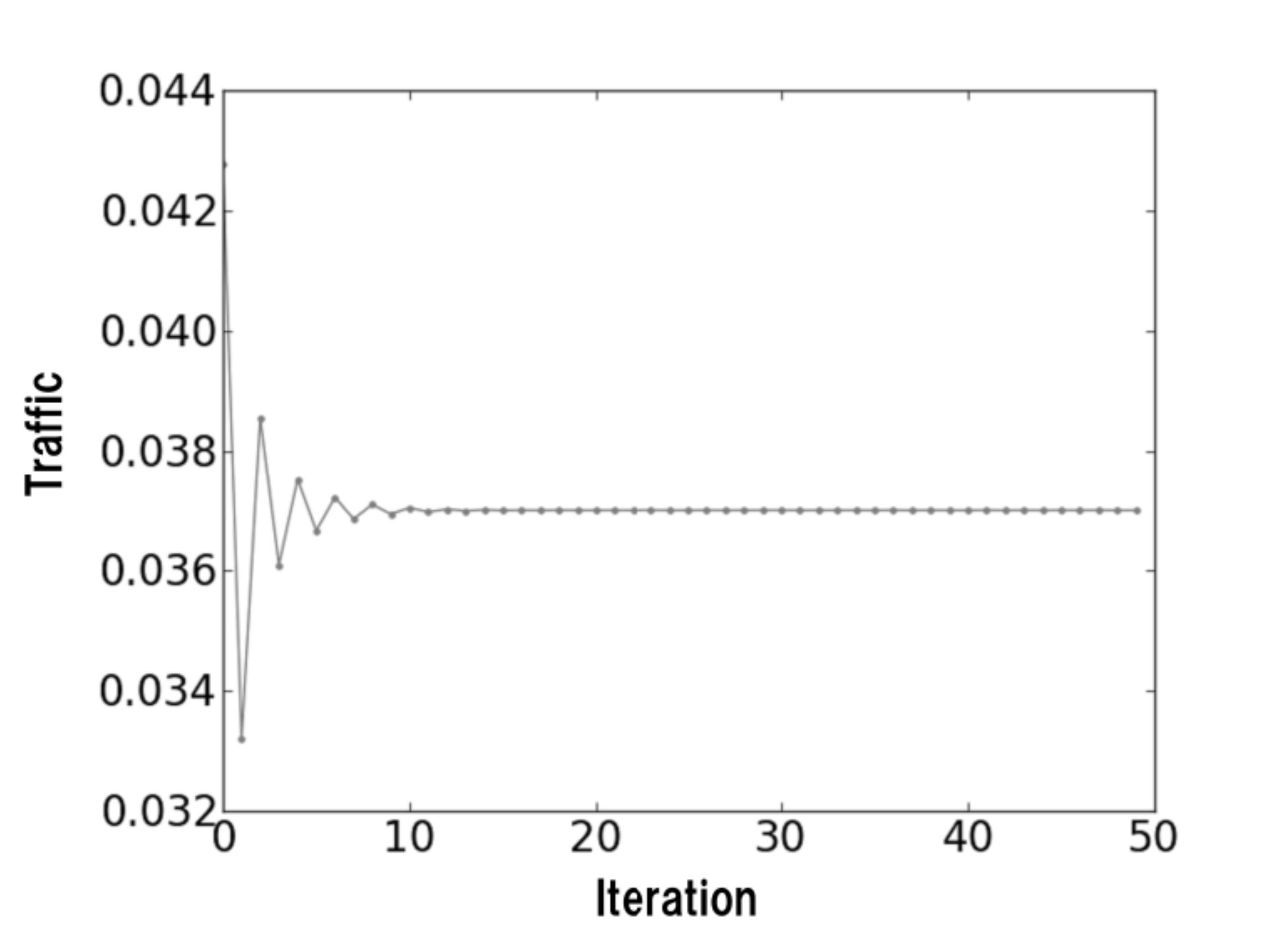}
	\capup
	\caption{Convergence of network traffic led by a self-fulfilling signal}
	\label{fig:stablenet}	
\end{figure}

\begin{figure}[!]
	\centering
	\includegraphics[clip,width=7cm]{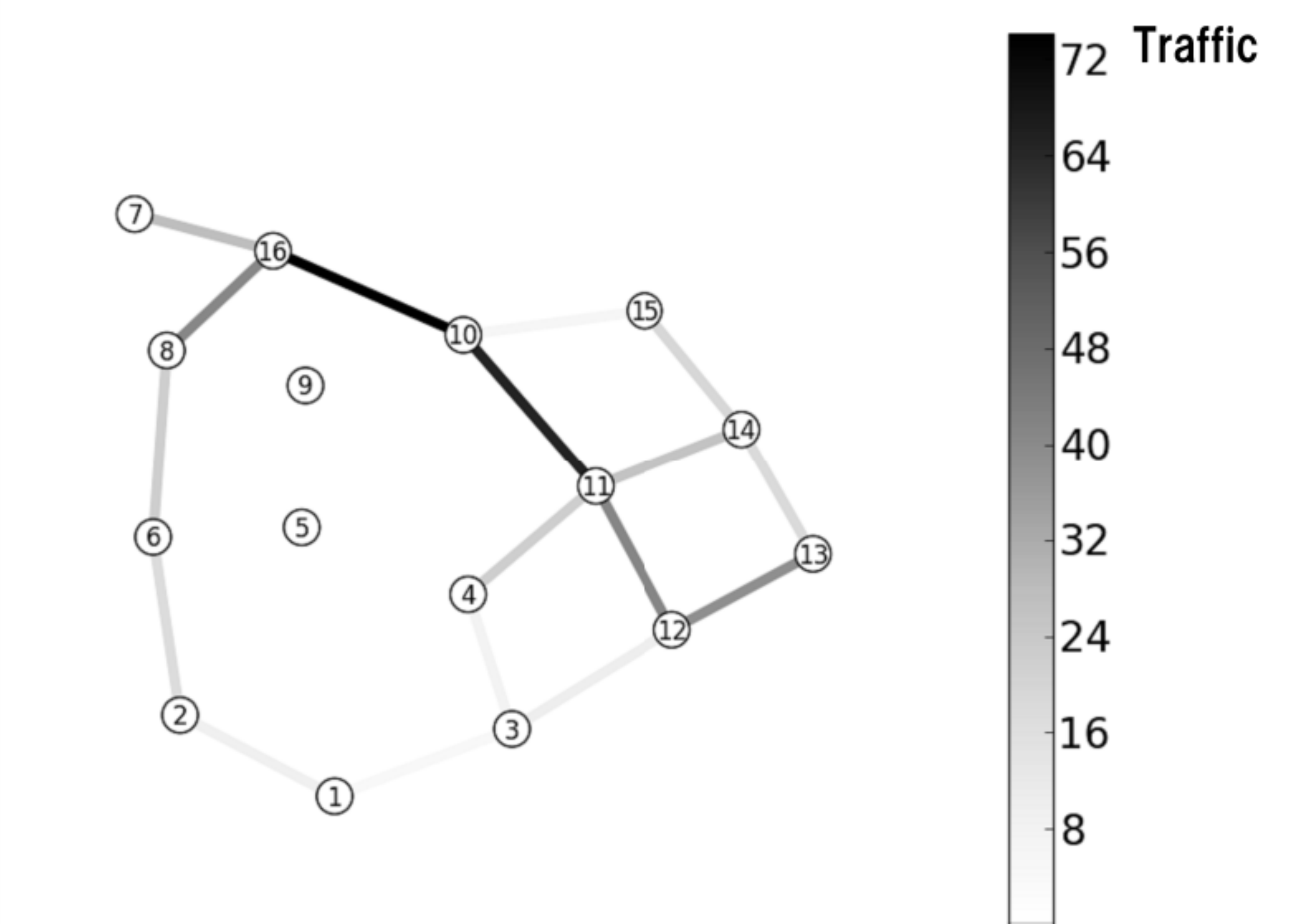}
	\capup
	\caption{Network traffic without signal}
	\label{fig:flow0}	
\end{figure}

\begin{figure}[!]
	\centering
	\includegraphics[clip,width=7cm]{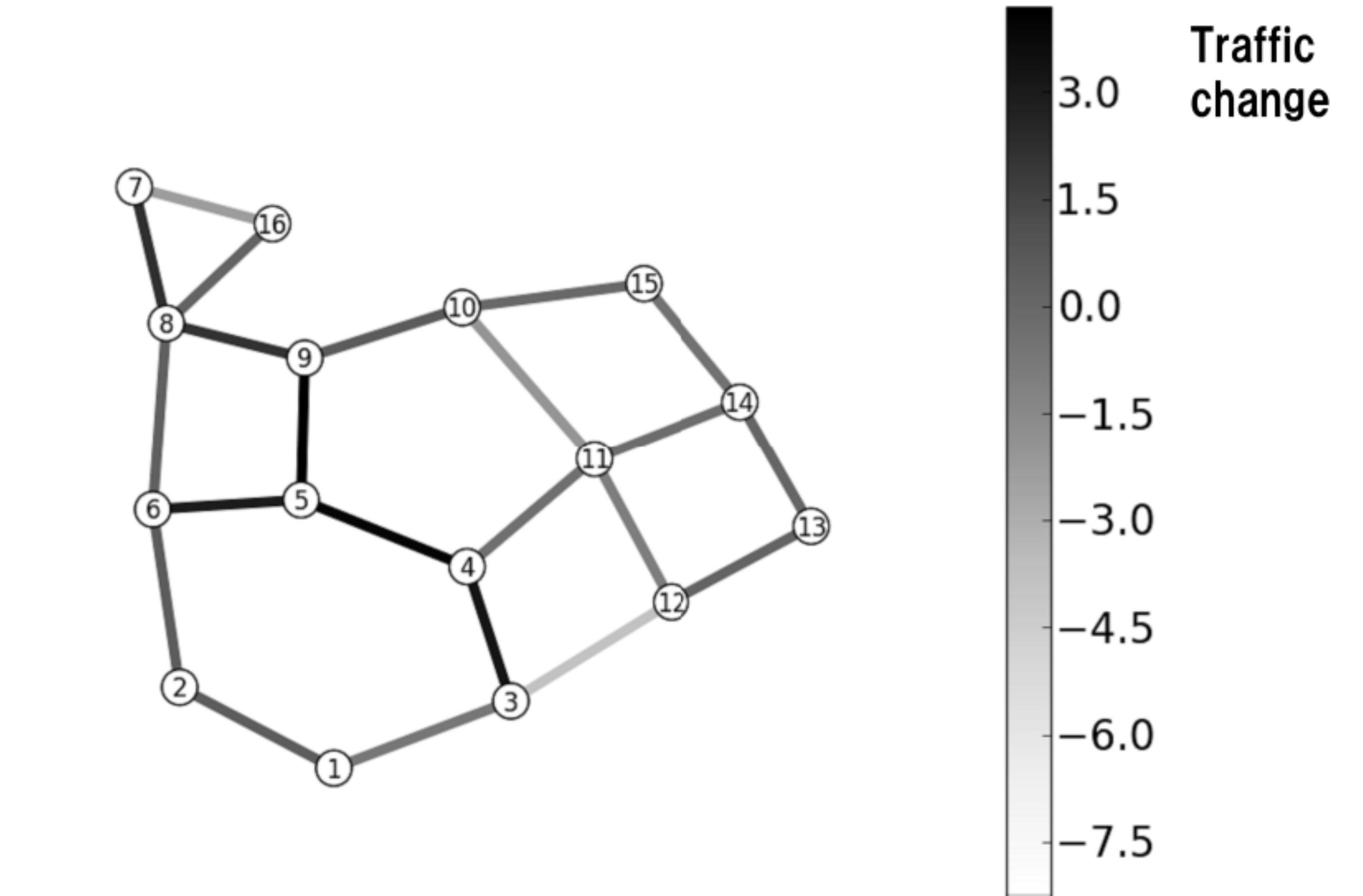}
	\caption{Change in network traffic caused by a self-fulfilling signal}
	\label{fig:flowd}	
\end{figure}

\begin{figure}[!]
	\centering
	\includegraphics[clip,width=7cm]{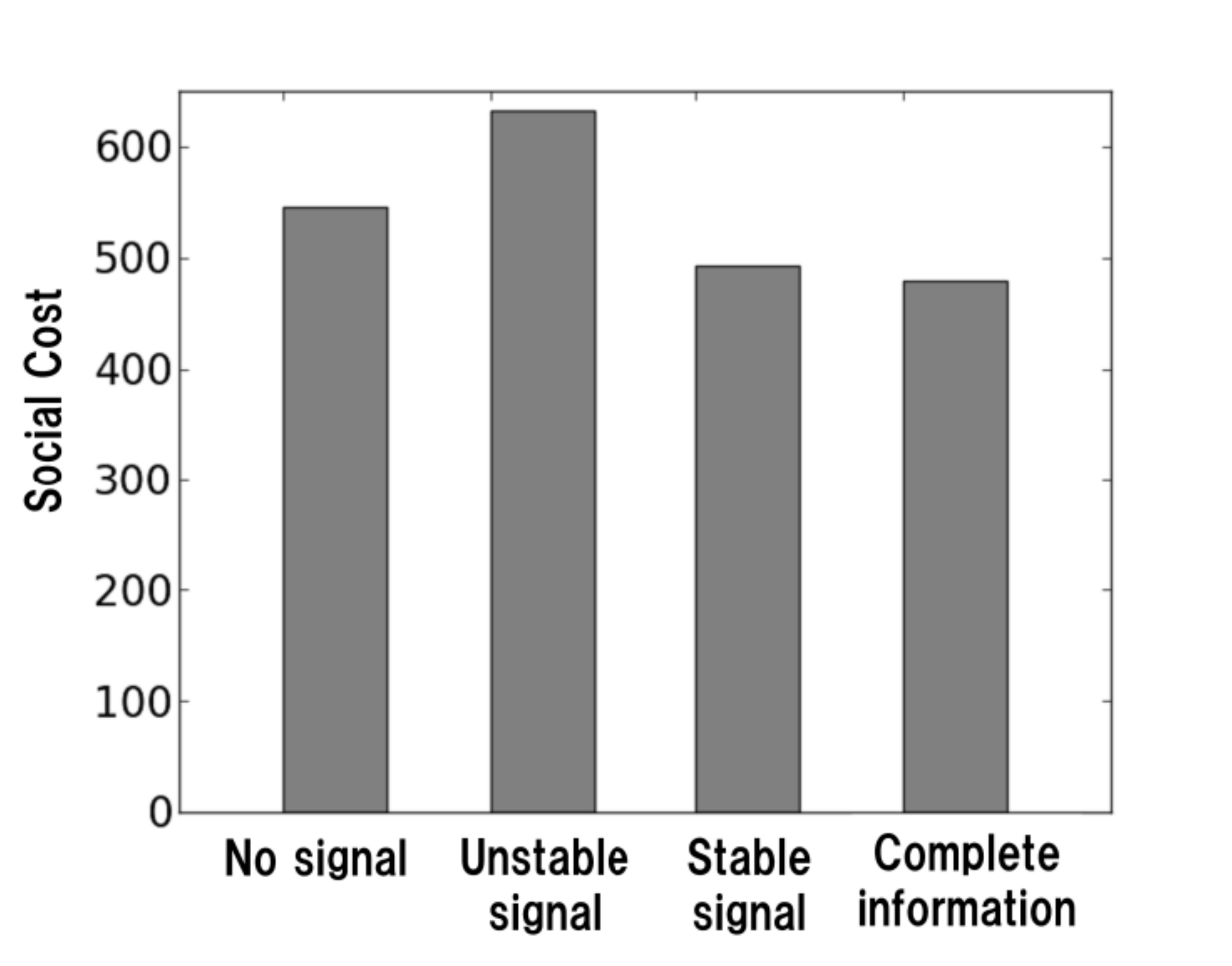}
	\capup
	\caption{Change in network traffic efficiency by a self-fulfilling signal}
	\label{fig:netsc}	
\end{figure}

Figure \ref{fig:flow0} shows network traffic without signal. It seems that most of the traffic is concentrated on certain edges. Figure \ref{fig:flowd} shows how self-fulfilling signals mitigate this problem. It seems that this signal induced traffic from congested edges to unused edges. Figure \ref{fig:netsc} indicates a comparison of social cost $\vect{\phi}\vect{h}$. Stable self-fulfilling signal successfully reduced social cost compared with the case without signal, whereas unstable signal increased the social cost. The rightmost bar in Figure \ref{fig:netsc} shows the social cost of Wardrop's UE in a complete information case, wherein drivers have full observability of network traffic without delay. The result of the stable self-fulfilling signal is the closest to the one of the complete information case.

\color{black}

\section{Conclusion}
\label{sec:conclusion}

In this study, we have shown how to coordinate agents using endogenous signal. This study formulates the problem of designing a self-fulfilling signal that suppresses the outcome oscillation and illustrates an instance of such a signal in a network congestion game. Examples demonstrate that the proposed technique is effective in coordination of agents and suppression of the oscillation.

Our theory is based on several assumptions which would be open problems for the future application to real-world problems. Specifically, following are interesting extensions:
\begin{itemize}
	\item Relaxation of the Gaussian prior assumption of receivers' beliefs
	\item Consideration of the receivers' reinforcement learning through repeated games
\end{itemize}



\bibliographystyle{named}
\bibliography{nets}

\newpage

\begin{appendices}
\section{Bayesian Inference for linear Gaussian model}
\label{sec:gaussbayes}
Here, we summarize the general results of Bayesian inference for a linear Gaussian model \citep{prml}. Let prior Gaussian distribution of $\vect{x}$ be denoted by
\begin{equation}
p(\vect{x})=N(\vect{x};\vect{\mu},\vect{\Lambda}^{-1}).
\end{equation}
In linear Gaussian model, the conditional distribution of $\vect{y}$ given $\vect{x}$ has a mean that is a linear function of $x$, such as
\begin{equation}
p(\vect{y}|\vect{x})=N(\vect{y};\vect{A}\vect{x}+\vect{b},\vect{L}^{-1}).
\end{equation}

The posterior distribution of $\vect{x}$ given $\vect{y}$ is denoted by
\begin{equation}
p(\vect{x}|\vect{y})=N(\vect{x};\vect{\Sigma}\{\vect{A}^T\vect{L}(\vect{y}-\vect{b})+\vect{\Lambda}\vect{\mu}\},\vect{\Sigma}),
\end{equation}
where
\begin{equation}
\vect{\Sigma}=(\vect{\Lambda}+\vect{A}^T\vect{L}\vect{A})^{-1}.
\end{equation}

The marginal distribution of $y$ is denoted by
\begin{equation}
p(\vect{y})=N(\vect{y};\vect{A}\vect{\mu}+\vect{b},\vect{L}^{-1}+\vect{A}\vect{\Lambda}^{-1}\vect{A}^T).
\end{equation}

\color{\cng}
 If covariance matrices are not full rank, it is impossible to calculate their inverse matrices. In this case, there exists a linear transformation that reduces the dimension of vector $\vect{x}$ and makes its covariance $\vect{\Sigma}$ full rank as follows
 
 	\begin{equation}
 	\left.
 	\begin{array}{l}
	\tilde{\vect{x}}=\vect{C}(\vect{x}-\vect{\mu}) \\
 	\tilde{\vect{\Sigma}}=\vect{C}\vect{\Sigma}\vect{C}^T.
 	\end{array}
 	\right.
 	\end{equation}
 	
This transformation matrix $\vect{C}$ is obtained by singular value decomposition. The calculation including inverse matrices can be processed in this reduced vector space. Let $\tilde{\vect{x}}_{2}$ be the result. This is converted into the original vector space by the following reverse transformation

 	\begin{equation}
 	\vect{x}_{2}=\vect{C}^T\tilde{\vect{x}}_{2}+\vect{\mu}.
 	\end{equation} 
  
\color{black}

\end{appendices}

\end{document}